\newtheorem{theorem}{Theorem}
\newtheorem{proposition}[theorem]{Proposition}
\newtheorem{lemma}{Lemma}
\newtheorem{definition}{Definition}
\def\beq{\begin{equation}}
\def\eeq{\end{equation}}
\def\bea{\begin{eqnarray}}
\def\eea{\end{eqnarray}}
\def\benpf{\noindent {\textbf{{\emph{Proof.}}\;}}}
\def\endpf{\hfill$\blacksquare$\medskip}
\let\expandafter
\def\subeqnarray{\arraycolsep1pt
   \def\@eqnnum\stepcounter##1{\stepcounter{subequation}
       {\reset@font\rm(\theequation\alph{subequation})}}
\jot5mm     \eqnarray}
\newcommand{\R}{{\mathbb R}}
\newcommand{\D}{{\mathsf D}}
\newcommand{\dd}{{\mathsf d}}
\newcommand{\EE}{{\mathscr E}}
\def\epsilon{\varepsilon}
\def\pa{\partial}
\def\nn{\nonumber}
\def\endpf{\hfill$\square$\medskip}
\def\E{{\rm{e}}}
\newbox\meibox
\def\placeunder#1#2#3#4{\setbox\meibox%
\vbox{\hbox{\hskip#4$\hphantom{#2}$}\hbox{$\hphantom{#1}$}}%
\vtop{\baselineskip=0pt\lineskiplimit=\baselineskip%
\lineskip=#3\hbox to \wd\meibox{\hfil\hskip#4$#2$\hfil}%
\hbox to \wd\meibox{\hfil$#1$\hfil}}}
\def\intprod{\mathbin{\hbox to 6pt{%
                 \vrule height0.4pt width5pt depth0pt
                 \kern-.4pt
                 \vrule height6pt width0.4pt depth0pt\hss}}}
\begin{document}
\title[Variational symmetries and pluri-Lagrangian systems]
{Variational symmetries and pluri-Lagrangian systems\\ in classical mechanics}

\author{Matteo Petrera \and Yuri B. Suris }

\thanks{E-mail: {\tt  petrera@math.tu-berlin.de, suris@math.tu-berlin.de}}

\maketitle

\begin{center}
{\footnotesize{
Institut f\"ur Mathematik, MA 7-1\\
Technische Universit\"at Berlin, Str. des 17. Juni 136,
10623 Berlin, Germany
}}
\end{center}

\begin{abstract} 
We analyze the relation of the notion of a pluri-Lagrangian system, which recently emerged in the theory of integrable systems, to the classical notion of variational symmetry, due to E. Noether.  We treat classical mechanical systems and show that, for any Lagrangian system with $m$ commuting variational symmetries, one can construct a pluri-Lagrangian 1-form in the $(m+1)$-dimensional time, whose multi-time Euler-Lagrange equations coincide with the original system supplied with $m$ commuting evolutionary flows corresponding to the variational symmetries. We also give a Hamiltonian counterpart of this construction, leading, for any system of commuting Hamiltonian flows, to a pluri-Lagrangian 1-form with coefficients depending on functions in the phase space.

\end{abstract}


\section{Introduction}

This paper investigates some aspects of variational structure of integrable systems. The corresponding theory was initiated in \cite{LN1}, where it was shown that solutions of integrable (multi-dimensionally consistent) quad-equations on any quad-surface $\Sigma$ in $\mathbb{Z}^m$ are critical points of a certain action functional $\int_\Sigma\mathcal L$ obtained by integration of a suitable discrete Lagrangian 2-form $\mathcal L$. Moreover, it was observed in \cite{LN1} that the critical value of the action remains invariant under local changes of the underlying quad-surface, or, in other words, that the 2-form $\mathcal L$ is closed on solutions of quad-equations, and it was suggested to consider this as a defining feature of integrability. 

This line of research was developed in several directions, mainly by two teams: by Nijhoff with collaborators (under the name ``theory of Lagrangian multi-forms''), see \cite{LN1, LN2, LNQ, YLN, ALN}, and by the authors of the present paper with collaborators, who termed the corresponding structures ``pluri-Lagrangian'', see \cite{BS, BPS1,BPS2,BPS3,S1,S2,S3,SV}. As argued in \cite{BPS1}, the unconventional idea to consider the action on arbitrary two-dimensional surfaces in the multi-dimensional space of independent variables has significant precursors. These include:
\begin{itemize}
\item {\em Theory of pluriharmonic functions}. By definition, a pluriharmonic function of $m$ complex variables $f:\mathbb{C}^m\to\mathbb{R}$ minimizes the Dirichlet energy 
$E_\Gamma$ along any holomorphic curve in its domain $\Gamma:\mathbb{C}\to\mathbb{C}^m$, defined as
$$E_\Gamma=\int_\Gamma |(f\circ \Gamma)_z|^2dz\wedge d\bar z.$$ The similarity of this definition with the above mentioned property of the Lagrangian structure of multi-dimensionally consistent discrete systems motivates the term \emph{pluri-Lagrangian systems}, which was proposed in \cite{BPS1, BS}.
Differential equations governing pluriharmonic functions,
\[
\frac{\partial^2 f}{\partial z_i\partial \bar{z}_j}=0 \quad {\rm for\;\;all}\quad i,j=1,\ldots,m,
\]
are heavily overdetermined. Therefore it is not surprising that pluriharmonic functions belong to the theory of integrable systems.

\item \emph{Baxter's Z-invariance} of solvable models of statistical mechanics \cite{Bax1, Bax2}. This concept is based on invariance of the partition functions of solvable models under elementary local transformations of the underlying planar graphs. It is well known that one can identify planar graphs underlying these models with quad-surfaces in $\mathbb{Z}^m$. On the other hand, the classical mechanical analogue of the partition function is the action functional. This suggests the relation of Z-invariance to the concept of closedness of the Lagrangian 2-form, at least at the heuristic level. This relation has been made mathematically precise for a number of models, through the quasiclassical limit \cite{BMS1, BMS2}.

\item The notion of \emph{variational symmetry}, going back to the seminal work of E.~Noether \cite{N}. One only rarely finds an adequate account of her result in the modern literature. Even the classical textbooks like the Arnold's one \cite{A} only present very particular and restricted versions of Noether's theorem which deal with point symmetries. Relevant for us is the most general (original) version, dealing with what is nowadays called generalized symmetries. One of the best sources treating Noether's theorem in its full generality is Olver's textbook \cite{Olver}, which we closely follow in our presentation. In the context of integrable partial differential equations, a direct relation between Noether's theorem and closedness of the Lagrangian form in the multi-time has been shown in \cite{S2}. Here, we elaborate on this topic further, in the context of classical mechanics.
\end{itemize}

The structure of the paper is as follows. In Section  \ref{Sect var sym}, we recall, following mainly \cite[Chapter 5]{Olver}, the notions of generalized vector fields and of variational symmetries, and give a proof of Noether's theorem in the context of classical mechanical systems. In Section \ref{Sect Ham}, we provide the reader with the Hamiltonian counterpart of the theory, and show that variational symmetries of a Lagrangian system are in a one-to-one correspondence with integrals of the corresponding Hamiltonian system. Surprizingly, this fundamental result, well-known in the folklore, is difficult to locate in the literature. Our presentation is pretty close to \cite{PV}. In Section \ref{Sect comm var sym}, we discuss commuting variational symmetries, which are Lagrangian counterparts of commuting integrals of a given Hamiltonian system. Again, even if our results on commuting variational symmetries will be of no surprise for an expert, we were unable to locate a reference which would contain a satisfying presentation, and therefore we include complete details here. In Section \ref{sect from to}, we undertake a slight but a fundamentally important change of the viewpoint on variational symmetries, from an algebraic interpretation as derivations acting on differential polynomials to a geometric interpretation as commuting flows.  Of course, this latter interpretation is not less classical, but, amazingly, it seems to be usually suppressed in the literature on symmetries of differential equations. This geometric interpretation is the crucial link to the emerging theory of pluri-Lagrangian systems. In Section \ref{Sect pluri}, we recall, following \cite{S2}, the main positions of this theory. In Section \ref{Sect pluri config space}, we establish the pluri-Lagrangian structure for a general Lagrangian system possessing commuting variational symmetries. Section \ref{Sect almost closed} is devoted to some technical computations whose results illuminate the main property of the pluri-Lagrangian structure, namely the almost closedness of the pluri-Lagrangian 1-form on the space of solutions.  In Section \ref{Sect phase space}, we demonstrate that considering the pluri-Lagrangian structure in the phase space (rather than in the configuration space) allows us to greatly simplify all the concepts and constructions. The final Section \ref{Sec examples} illustrates the concepts and results of the paper with two classical examples of integrable systems, the Kepler system and the Toda lattice.

\section{Variational symmetries and Noether theorem}
\label{Sect var sym}

Let $x=(x_1,\dots,x_N)$  be coordinates on an $N$-dimensional configuration space $\mathcal{X}$. We mainly consider the case $\mathcal{X}=\R^{N}$, but $\mathcal{X}$ can also  be a smooth $N$-dimensional manifold with local coordinates $x$. Let $(x,\dot x)$ be the natural coordinates on the $2N$-dimensional tangent bundle $T\mathcal{X}$, and let  $L: T \mathcal{X} \to \R$ be a Lagrange function. The Hamilton's principle states that motions of the corresponding mechanical system are critical curves of the action functional $S$, which assigns to each curve $x:[t_0,t_1]\to\mathcal X$ (with fixed values of $x(t_0)$ and $x(t_1)$) the number
\beq  \label{lp}
S[x]=\int_{t_0}^{t_1} L(x(t), \dot x(t)) \dd t.
\eeq
Critical curves of the action (\ref{lp})  are solutions of the Euler-Lagrange equations given by
\beq\label{eq: EL 1}
\EE_i=\EE_i(x,\dot x,\ddot x):= \frac{\pa L(x, \dot x)}{\pa x_i}- \D_{t}\!
\left(\frac{\pa L(x, \dot x)}{\pa \dot x_i}\right)=0, \qquad i=1,\dots,N. 
\eeq
Here and everywhere below $\D_{t}$ denotes the total derivative w.r.t.~time $t$.
We will assume that $L$ is non-degenerate:
\beq \label{eq: nondeg}
\det\left(\frac{\partial^2 L}{\partial \dot x_i\partial \dot x_j}\right)\neq 0,
\eeq
so that the Euler-Lagrange equations can be solved for $\ddot x_i$, resulting in explicit second order differential equations.

Consider an evolutionary generalized vector field $v_1$ on $\mathcal{X}$,
\begin{equation} \label{eq: gen vector field}
v_1= \sum_{i=1}^N V_i ^{(1)}(x, \dot x)\frac{\pa}{\pa x_i}\ ,
\end{equation}
where the $N$-tuple of functions $V_i^{(1)} (x, \dot x)$, $i=1,\dots,N$, is called the characteristic  of $v_1$ (the index ``1'' will become important later, when we consider several generalized vector fields simultaneously). It generates a linear differential operator acting on differential polynomials, i.e., on functions depending on $x$ and its time derivatives $\dot x$, $\ddot x$, etc., defined by the formal sum
\beq
\D_{v_1}=\sum_{i=1}^N V_i^{(1)} \frac{\pa}{\pa x_i}+ \sum_{i=1}^N \Big(\D_{t}  V_i^{(1)}\Big) \frac{\pa}{\pa \dot x_i}+
\sum_{i=1}^N \Big(\D_{t}^2  V_i^{(1)}\Big) \frac{\pa}{\pa \ddot x_i}+\dots
\label{dgwrhg}
\eeq
Observe that $\D_{t} V_i^{(1)}$ is a function of $x$, $\dot x$ and $\ddot x$, $\D_{t}^2 V_i^{(1)}$ is a function of $x$, $\dot x$, $\ddot x$ and $\dddot x$, and so on. 

\begin{definition} {\bf (Variational symmetry)}
We say that $v_1$ is a {\rm{variational symmetry}} of the Lagrangian problem (\ref{lp}) if there exists a function $F_1=F_1(x, \dot x)$, called the 
{\rm{flux}} of the variational symmetry $v_1$, such that
\beq  \label{eq: var sym}
\D_{v_1} L (x, \dot x) - \D_{t} F_1(x, \dot x)=0.
\eeq
\end{definition}

\begin{definition} {\bf (Integral and its characteristic)}
We say that $J_1(x,\dot x)$ is an {\rm{integral of the Euler-Lagrange equations}} $\EE_i=0$ (see (\ref{eq: EL 1})) with the {\rm{characteristic}} $\big(V^{(1)}_i(x,\dot x)\big)_{1\le i\le N}$, if 
\beq  \label{eq: Noether integral char}
\D_{t} J_1(x, \dot x)=-\sum_{i=1}^N V_i^{(1)} (x, \dot x) \EE_i\ .
\eeq
\end{definition}

The famous theorem by Emmy Noether establishes a one-to-one correspondence between variational symmetries and integrals of Lagrangian systems. 

\begin{theorem} \label{Th: Noether} {\bf (E. Noether's theorem)}
\quad 

{\rm{a)}} Let the generalized vector field $v_1$ given by equation (\ref{eq: gen vector field}) be a variational  symmetry of the Lagrangian problem (\ref{lp}), with the flux $F_1(x,\dot x)$.  Then the function
\beq \label{eq: Noether integral}
J_1(x, \dot x)= \sum_{i=1}^N \frac{\partial L(x, \dot x)}{\partial \dot x_i}  V_i^{(1)}(x, \dot x)-F_1(x, \dot x)
\eeq
is an integral of Euler-Lagrange equations $\EE_i=0$ with the characteristic $\big(V^{(1)}_i(x,\dot x)\big)_{1\le i\le N}$.

\smallskip

{\rm{b)}} Conversely, let $J_1(x,\dot x)$ be an integral of motion of Euler-Lagrange equations $\EE_i=0$, with the characteristic $\big(V^{(1)}_i(x,\dot x)\big)_{1\le i\le N}$. Then the generalized vector field $v_1$ given by equation (\ref{eq: gen vector field})  is a variational symmetry of the Lagrangian problem (\ref{lp}), with the flux
\beq\label{eq: flux}
F_1(x,\dot x)= \sum_{i=1}^N \frac{\partial L(x, \dot x)}{\partial \dot x_i}  V_i^{(1)}(x, \dot x)-J_1(x, \dot x).
\eeq
\end{theorem}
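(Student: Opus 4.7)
The plan is to derive a single off-shell algebraic identity relating $\D_{v_1} L$, $\D_t F_1$, $\D_t J_1$ and $\sum_i V_i^{(1)} \EE_i$; both (a) and (b) then follow by inspection. At its core, Noether's theorem in this setting is nothing more than a rearrangement of the Leibniz rule applied to the product $\sum_i (\pa L/\pa \dot x_i)\, V_i^{(1)}$, followed by recognition of the Euler-Lagrange expression (\ref{eq: EL 1}).

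First I would expand $\D_{v_1} L$ using the definition (\ref{dgwrhg}); since $L$ depends only on $(x,\dot x)$, only two sums survive:
\[
\D_{v_1} L = \sum_{i=1}^N V_i^{(1)}\frac{\pa L}{\pa x_i} + \sum_{i=1}^N \big(\D_t V_i^{(1)}\big)\frac{\pa L}{\pa \dot x_i}.
\]
Next I would apply the Leibniz rule to trade the troublesome $\D_t V_i^{(1)}$ (which involves $\ddot x$) for a total derivative and a $\D_t(\pa L/\pa \dot x_i)$ term:
\[
\sum_{i=1}^N \big(\D_t V_i^{(1)}\big)\frac{\pa L}{\pa \dot x_i} = \D_t\!\left(\sum_{i=1}^N \frac{\pa L}{\pa \dot x_i}\,V_i^{(1)}\right) - \sum_{i=1}^N V_i^{(1)}\,\D_t\!\left(\frac{\pa L}{\pa \dot x_i}\right).
\]
Substituting and collecting, the two remaining $V_i^{(1)}$-weighted terms combine precisely into $\sum_i V_i^{(1)}\EE_i$, yielding the master identity
\[
\D_{v_1} L \;=\; \D_t\!\left(\sum_{i=1}^N \frac{\pa L}{\pa \dot x_i}\,V_i^{(1)}\right) + \sum_{i=1}^N V_i^{(1)}\,\EE_i.
\]

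From here both parts are immediate. The ansatz (\ref{eq: Noether integral})--(\ref{eq: flux}) asserts $\sum_i (\pa L/\pa \dot x_i)\, V_i^{(1)} = J_1 + F_1$, so the identity rewrites as
\[
\D_{v_1} L - \D_t F_1 \;=\; \D_t J_1 + \sum_{i=1}^N V_i^{(1)}\,\EE_i.
\]
For (a), the variational symmetry hypothesis (\ref{eq: var sym}) annihilates the left-hand side and delivers (\ref{eq: Noether integral char}). For (b), the integral hypothesis (\ref{eq: Noether integral char}) annihilates the right-hand side and delivers (\ref{eq: var sym}); moreover the prescribed $F_1$ in (\ref{eq: flux}) is manifestly a function of $(x,\dot x)$ alone, as required.

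There is no genuine obstacle; the only point that merits care is the off-shell character of the identity. Although $L$, $F_1$, $J_1$ and the characteristic components $V_i^{(1)}$ all depend only on $(x,\dot x)$, the quantities $\D_t V_i^{(1)}$, $\D_t(\pa L/\pa \dot x_i)$, $\D_t F_1$, $\D_t J_1$ and $\EE_i$ all involve $\ddot x$, and the master identity must be read as an identity in the jet variables $(x,\dot x,\ddot x)$. This off-shell stance is precisely what makes the correspondence between variational symmetries and integrals exact, rather than merely holding modulo the Euler-Lagrange equations, and it is what will let us iterate the construction cleanly when several commuting symmetries are considered later on.
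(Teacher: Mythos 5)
Your proposal is correct and follows essentially the same route as the paper: both derive the single off-shell identity $\D_{v_1}L=\D_t\big(\sum_i(\pa L/\pa\dot x_i)V_i^{(1)}\big)+\sum_iV_i^{(1)}\EE_i$ (the paper by substituting $\pa L/\pa x_i=\EE_i+\D_t(\pa L/\pa\dot x_i)$, you by an equivalent Leibniz-rule rearrangement) and read off both parts from it. Your explicit remark on the off-shell, jet-variable character of the identity is a useful clarification but not a different argument.
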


\benpf
Both parts are consequences  of the following computation:
\bea
\D_{v_1} L  &=& 
\sum_{i=1}^N  V_i ^{(1)}\frac{\pa L}{\pa x_i}+ \sum_{i=1}^N \Big(\D_{t} V_i^{(1)}\Big) \frac{\pa L}{\pa \dot x_i}\nn  \\
&=& \sum_{i=1}^N  V_i ^{(1)} \!\left(\EE_i+\D_t \! \left(\frac{\pa L}{\pa \dot x_i}\right)\right) + \sum_{i=1}^N \Big(\D_{t} V_i^{(1)}\Big) \frac{\pa L}{\pa \dot x_i}\nn\\
&=& \D_t\left(\sum_{i=1}^N \frac{\partial L}{\partial \dot x_i} V_i^{(1)}\!\right)+\sum_{i=1}^N V_i^{(1)} \EE_i \ .  \nn 
\eea
The theorem is proved.
\endpf 

{\bf Remark.} {\bf (Energy integral)} \ The generalized vector field
\[
v=\sum_{i=1}^N \dot x_i \frac{\partial}{\partial x_i},
\]
which generates the differential operator $\D_v=\D_t$ (when acting on functions which do not explicitly depend on $t$), is trivially a variational symmetry of any Lagrange function $L(x,\dot x)$, with the flux $F(x,\dot x)=L(x,\dot x)$. For this variational symmetry, the Noether integral turns into the {\em energy integral}
\beq \label{eq: energy}
J(x,\dot x)=\sum_{i=1}^N \frac{\partial L}{\partial \dot x_i} \dot x_i-L(x,\dot x).
\eeq

\section{Hamiltonian side of the picture}
\label{Sect Ham}

Introduce the conjugate momenta $p=(p_1,\dots,p_N)$ by the Legendre transformation $T\mathcal X\ni (x,\dot x)\mapsto (x,p)\in T^*\mathcal X$,
\beq\label{eq: Legendre}
p_i =p_i(x,\dot x) = \frac{\pa L(x, \dot x)}{\pa \dot x_i}, \qquad i=1,\dots,N.
\eeq
The cotangent bundle $T^*\mathcal X$ is a $2N$-dimensional symplectic manifold equipped with the canonical symplectic 2-form
$$
\omega= \sum_{i=1}^N \dd x_i \wedge \dd p_i.
$$
Under the non-degeneracy condition (\ref{eq: nondeg}), equation (\ref{eq: Legendre}) defines a local diffeomorphism $T\mathcal X\to T^*\mathcal X$, which can be locally inverted to express the velocities $\dot x_i$ in terms of $(x,p)$, i.e., $\dot x_i=\dot x_i(x,p)$. The dynamics on $T^*\mathcal{X}$, equivalent to that governed by Euler-Lagrange equations (\ref{eq: EL 1}) on $T\mathcal X$, is described in terms of Hamiltonian equations of motion:
\beq  \label{eq: Ham eq}
\D_t x_i= \frac{\pa H(x,p)}{\pa p_i}, \qquad
\D_{t} p_i= -\frac{\pa H(x,p)}{\pa x_i}, \qquad i=1,\dots,N.
\eeq
Here $H: T^*\mathcal{X}\to \R$ is the Hamilton function corresponding to the Lagrange function $L$ by means of a Legendre transformation:
\beq   \label{eq: Ham}
H(x,p)= \sum_{i=1}^N p_i \dot x_i - L(x, \dot x) \Big |_{\dot x=\dot x(x,p)}.
\eeq
Derivation of Hamiltonian equations of motion (\ref{eq: Ham eq}) is based on the following facts:
\beq \label{eq: L to H}
\frac{\pa H(x,p)}{\pa x_i} = - \frac{\pa L(x,\dot x)}{\pa x_i}, \qquad \frac{\pa H(x,p)}{\pa p_i} = \dot x_i\ ,
\eeq
which hold true at the points $(x,p)\in T^*\mathcal X$ and $(x,\dot x)\in T\mathcal X$ related by the Legendre transformation (\ref{eq: Legendre}) and which follow from the definition (\ref{eq: Ham}) by differentiation.
The Hamilton function $H(x,p)$ is nothing but the energy integral $J(x,\dot x)$ expressed in terms of $(x,p)$.  The fact that $H(x,p)$ is an integral of motion of the Hamiltonian equations (\ref{eq: Ham eq}) is easily checked by direct computation. 

The following two theorems, which are in a sense mutually converse, show that variational symmetries of a Lagrangian system are in a one-to-one correspondence with integrals of the corresponding Hamiltonian system.

\begin{theorem} \label{Th: Ham converse Noether} {\bf (From a variational symmetry to a commuting Hamiltonian flow)} 
Let $v_1$ be a variational symmetry of the Lagrange function $L(x,\dot x)$, with  the flux $F_1(x,\dot x)$ and with the Noether integral $J_1(x,\dot x)$. Set $H_1(x,p)=J_1(x,\dot x)|_{\dot x=\dot x(x,p)}$. Then
\beq \label{eq: Poisson bracket}
\{ H,H_1\}=\sum_{i=1}^N \left( \frac{\pa H}{\pa x_i}\frac{\pa H_1}{\pa p_i}- \frac{\pa H}{\pa p_i}\frac{\pa H_1}{\pa x_i}\right)=0,
\eeq
so that $H_1$ is an integral of motion of the Hamiltonian flow with the Hamilton function $H$. 
\end{theorem}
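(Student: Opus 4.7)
The plan is to deduce $\{H,H_1\}=0$ from the conservation of $J_1$ along Euler--Lagrange solutions, using that the Legendre transformation \eqref{eq: Legendre} identifies Euler--Lagrange dynamics with Hamiltonian dynamics. By Theorem \ref{Th: Noether}(a), $J_1$ satisfies the formal identity
\[
\D_t J_1(x,\dot x) = -\sum_{i=1}^N V_i^{(1)}(x,\dot x)\,\EE_i(x,\dot x,\ddot x),
\]
whose right-hand side vanishes on any solution of the Euler--Lagrange equations. Hence $t\mapsto J_1(x(t),\dot x(t))$ is constant along every such solution.

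Fix now an arbitrary point $(x_0,p_0)\in T^*\mathcal X$ and let $(x(t),p(t))$ be the local Hamiltonian trajectory through it. Under the Legendre transformation this corresponds to the Euler--Lagrange trajectory $(x(t),\dot x(t))$ with $\dot x(t)=\dot x(x(t),p(t))$, so by the definition of $H_1$,
\[
H_1(x(t),p(t)) = J_1(x(t),\dot x(t)) = \mathrm{const}.
\]
Differentiating at $t=0$ and invoking Hamilton's equations \eqref{eq: Ham eq} gives
\[
\begin{aligned}
0 = \frac{d}{dt}\bigg|_{t=0} H_1(x(t),p(t))
&= \sum_{i=1}^N\left(\frac{\pa H_1}{\pa x_i}\frac{\pa H}{\pa p_i} - \frac{\pa H_1}{\pa p_i}\frac{\pa H}{\pa x_i}\right)\!\bigg|_{(x_0,p_0)} \\
&= -\{H,H_1\}(x_0,p_0).
\end{aligned}
\]
Since $(x_0,p_0)$ was arbitrary, $\{H,H_1\}\equiv 0$ on $T^*\mathcal X$.

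The only delicate step is passing from the formal jet-level identity of Noether's theorem (an identity in the independent variables $x,\dot x,\ddot x$) to a pointwise identity on phase space; this is handled by the observation that every point of $T^*\mathcal X$ lies on a Hamiltonian trajectory, which under Legendre corresponds to an Euler--Lagrange solution on which $\EE_i = 0$. A direct alternative is to expand $\{H,H_1\}$ by the chain rule applied to $H_1(x,p)=J_1(x,\dot x(x,p))$, using \eqref{eq: L to H}; the result identifies with $-\D_t J_1$ evaluated with $\ddot x$ determined by Euler--Lagrange, and Noether's identity finishes the calculation in the same way.
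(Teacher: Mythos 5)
Your proof is correct, but it takes a genuinely different route from the paper's. You argue ``dynamically'': Noether's identity makes $J_1$ constant along every Euler--Lagrange solution, the Legendre transformation carries these solutions to Hamiltonian trajectories, every point of $T^*\mathcal X$ lies on such a trajectory, and differentiating $H_1$ along it yields $\{H_1,H\}=0$ pointwise. This is a clean, soft argument; the one delicate point --- passing from the jet-level identity $\D_t J_1=-\sum_i V_i^{(1)}\EE_i$ to a statement about actual trajectories, and back to a pointwise identity on phase space --- you identify and handle correctly (non-degeneracy \eqref{eq: nondeg} guarantees local existence of a solution through every initial condition). The paper instead proves the theorem by a direct algebraic computation: it first establishes the explicit formulas \eqref{eq: LQ1 to H1 x}, \eqref{eq: LQ1 to H1 p} for $\pa H_1/\pa x_i$ and $\pa H_1/\pa p_i$ in terms of $V_i^{(1)}$ and $F_1$ (via Lemma \ref{lemma F} on $\pa F_1/\pa\dot x_i$), and then expands the Poisson bracket directly, reducing it to $\sum_i V_i^{(1)}\EE_i+\D_t J_1=0$. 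What the computational route buys is precisely these intermediate identities, which are not by-products of your argument but are reused repeatedly later in the paper (in Proposition \ref{lemma skew} and in the proof that commuting variational symmetries give Poisson-commuting Hamiltonians); what your route buys is brevity and conceptual transparency for the theorem taken in isolation.
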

\benpf
First of all, we prove that
\begin{align}
\label{eq: LQ1 to H1 x}
    \frac{\pa H_1(x,p)}{\pa x_i} & = \sum_{j=1}^N p_j\frac{\partial V_j^{(1)}(x,\dot x)}{\partial x_i}- \frac{\pa F_1(x,\dot x)}{\pa x_i}, \\
\label{eq: LQ1 to H1 p}    
    \frac{\pa H_1(x,p)}{\pa p_i} & = V_i^{(1)}(x,\dot x).
\end{align}
Indeed, we compute:
\begin{eqnarray*}
\frac{\pa H_1(x,p)}{\pa x_i} & =  & \sum_{j=1}^N p_j\frac{\partial V_j^{(1)}}{\partial x_i}
                                                    +\sum_{j=1}^N\sum_{k=1}^N p_j\frac{\partial V_j^{(1)}}{\partial \dot x_k}  \frac{\partial \dot x_k}{\partial x_i}
                                                    -\frac{\partial F_1}{\partial x_i}-\sum_{k=1}^N \frac{\partial F_1}{\partial \dot x_k} \frac{\partial \dot x_k}{\partial x_i}\ , \\
\frac{\pa H_1(x,p)}{\pa p_i} & = & V_i^{(1)}
                                                    +\sum_{j=1}^N\sum_{k=1}^N p_j\frac{\partial V_j^{(1)}}{\partial \dot x_k}  \frac{\partial \dot x_k}{\partial p_i}
                                                    -\sum_{k=1}^N \frac{\partial F_1}{\partial \dot x_k} \frac{\partial \dot x_k}{\partial p_i}\ ,
\end{eqnarray*}
and (\ref{eq: LQ1 to H1 x}), (\ref{eq: LQ1 to H1 p}) follow by virtue of the following Lemma:
\begin{lemma} \label{lemma F}
For the flux $F_1(x,\dot x)$ of a variational symmetry (\ref{eq: gen vector field}), one has:
\beq  \label{eq: dFdxdot}
\frac{\partial F_1}{\partial \dot x_i}=\sum_{j=1}^N \frac{\partial L}{\partial \dot x_j}\ \frac{\partial V_j^{(1)}}{\partial \dot x_i}.
\eeq
\end{lemma}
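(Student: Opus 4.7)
The strategy is to expand the defining equation of a variational symmetry, $\D_{v_1} L(x,\dot x)-\D_t F_1(x,\dot x)=0$, as a polynomial identity in the jet variables and to extract the coefficients of the highest-order derivatives. Since $L$, $F_1$ and $V_i^{(1)}$ all depend only on $(x,\dot x)$, while $\D_t$ raises the jet order by one, the left-hand side is a priori a function of $(x,\dot x,\ddot x)$ which must vanish identically; the desired identity will emerge as the condition that its $\ddot x$-coefficient be zero.

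Concretely, I would first write out $\D_{v_1}L$ from definition \eqref{dgwrhg} as
\[
\D_{v_1}L=\sum_{i=1}^N V_i^{(1)}\frac{\partial L}{\partial x_i}+\sum_{i=1}^N (\D_t V_i^{(1)})\frac{\partial L}{\partial \dot x_i},
\]
and expand $\D_t V_i^{(1)}=\sum_k \frac{\partial V_i^{(1)}}{\partial x_k}\dot x_k+\sum_k \frac{\partial V_i^{(1)}}{\partial \dot x_k}\ddot x_k$. Similarly, $\D_t F_1=\sum_k \frac{\partial F_1}{\partial x_k}\dot x_k+\sum_k \frac{\partial F_1}{\partial \dot x_k}\ddot x_k$. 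Substituting into the symmetry condition yields an expression which is affine in the variables $\ddot x_k$, with coefficients depending only on $(x,\dot x)$.

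Since $x(t)$, $\dot x(t)$, $\ddot x(t)$ can be prescribed independently at a single instant $t$, the $\ddot x$-coefficients must vanish separately from the $\ddot x$-independent part. Reading off the coefficient of $\ddot x_i$ gives
\[
\sum_{j=1}^N \frac{\partial L}{\partial \dot x_j}\frac{\partial V_j^{(1)}}{\partial \dot x_i}-\frac{\partial F_1}{\partial \dot x_i}=0,
\]
which is precisely \eqref{eq: dFdxdot}.

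There is no real obstacle here; the only subtlety worth highlighting in the write-up is the logical step that the identity $\D_{v_1}L-\D_t F_1=0$ is required to hold as a formal polynomial identity in $\ddot x$ (this is implicit in the definition, since neither $V^{(1)}$, $F_1$ nor $L$ depend on $\ddot x$), which justifies the termwise matching. As a by-product, comparison of the terms independent of $\ddot x$ yields the relation $\sum_i V_i^{(1)}\frac{\partial L}{\partial x_i}+\sum_{i,k}\frac{\partial V_i^{(1)}}{\partial x_k}\dot x_k\frac{\partial L}{\partial \dot x_i}=\sum_k \frac{\partial F_1}{\partial x_k}\dot x_k$, which will be useful when later verifying \eqref{eq: LQ1 to H1 x}.
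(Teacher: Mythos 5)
Your proposal is correct and follows essentially the same route as the paper: expand $\D_{v_1}L$ and $\D_t F_1$ as affine functions of $\ddot x$ and compare the coefficients of $\ddot x_i$. The paper's proof is exactly this computation (it leaves implicit the justification for termwise matching, which you correctly spell out).
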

\noindent
{\bf Proof of Lemma \ref{lemma F}.} We have:
\begin{eqnarray*}
\D_tF_1 & = & \sum_{i=1}^N \frac{\partial F_1}{\partial x_i} \ \dot x_i+\sum_{i=1}^N \frac{\partial F_1}{\partial \dot x_i} \ \ddot x_i\ , \\
\D_{v_1}L & = & \sum_{j=1}^N \frac{\partial L}{\partial x_j} V_j^{(1)}+\sum_{j=1}^N\frac{\partial L}{\partial \dot x_j} \big(\D_t V_j^{(1)}\big)\\
         & = & \sum_{j=1}^N \frac{\partial L}{\partial x_j} V_j^{(1)}
         +\sum_{j=1}^N\sum_{i=1}^N\frac{\partial L}{\partial \dot x_j}\ \frac{\partial V_j^{(1)}}{\partial x_i} \ \dot x_i
         +\sum_{j=1}^N\sum_{i=1}^N\frac{\partial L}{\partial \dot x_j} \ \frac{\partial V_j^{(1)}}{\partial \dot x_i} \ \ddot x_i. 
\end{eqnarray*}
Equation \eqref{eq: dFdxdot} follows by comparison of coefficients by $\ddot x_i$. \qed
\medskip

Continuing the proof of Theorem \ref{Th: Ham converse Noether}, we compute with the help of (\ref{eq: L to H}), (\ref{eq: LQ1 to H1 x}) and (\ref{eq: LQ1 to H1 p}):
\begin{eqnarray*}
\{H_1,H\} & = & \sum_{i=1}^N \left( -\frac{\pa H}{\pa x_i}\frac{\pa H_1}{\pa p_i}+ \frac{\pa H}{\pa p_i}\frac{\pa H_1}{\pa x_i}\right) \\
 & = & \sum_{i=1}^N\left(\frac{\partial L}{\partial x_i} V_i^{(1)}+\dot x_i\Big(\sum_{j=1}^N p_j\frac{\partial V_j^{(1)}}{\partial x_i}- \frac{\pa F_1}{\pa x_i}\Big)\right)\\
 & = & \sum_{i=1}^N (\EE_i+\D_tp_i)V_i^{(1)}+\sum_{j=1}^N p_j\big(\D_tV_j^{(1)}\big)-\D_tF_1-
 \sum_{i=1}^N \ddot x_i\bigg(\sum_{j=1}^N p_j\frac{\partial V_j^{(1)}}{\partial \dot x_i}- \frac{\pa F_1}{\pa \dot x_i}\bigg).
 \end{eqnarray*}
 According to Lemma \ref{lemma F}, the last sum vanishes, and we find:
 \[
 \{H_1,H\}=\sum_{i=1}^N V_i^{(1)}\EE_i+\D_t\Big(\sum_{i=1}^N p_iV_i^{(1)}-F_1\Big)=\sum_{i=1}^N V_i^{(1)}\EE_i+\D_tJ_1=0.
 \]
Theorem \ref{Th: Ham converse Noether} is proved.
\endpf

{\bf Remark.} Note that equations (\ref{eq: L to H}) can be recovered from equations (\ref{eq: LQ1 to H1 x}), (\ref{eq: LQ1 to H1 p}), if we replace $V_i^{(1)}(x,\dot x)$ and $F_1(x,\dot x)$ by $V_i(x,\dot x)=\dot x_i$ and $F(x,\dot x)=L(x,\dot x)$, respectively.

\begin{theorem} {\bf (From a commuting Hamiltonian flow to a variational symmetry)}  
Let $H_1: T^*\mathcal{X}\to \R$ be an integral of motion of the Hamiltonian flow with the Hamilton function $H$. 
Set
$$
V_i^{(1)}(x, \dot x)= \left.\frac{\pa H_1(x,p)}{ \pa p_i} \right|_{p=p(x,\dot x)}\ ,
$$
and define a generalized vector field $v_1=\sum_{i=1}^N V_i^{(1)}\partial/\partial x_i$. Then $v_1$ is a variational symmetry of the Lagrangian problem (\ref{lp}), with the Noether integral $J_1(x,\dot x)$ given by
$$
J_1(x,\dot x)=H_1(x,p) \Big |_{p=p(x,\dot x)},
$$
and with the flux $F_1(x,\dot x)$ given by formula (\ref{eq: flux}). 
\end{theorem}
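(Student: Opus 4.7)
The plan is to reverse the chain of implications in the proof of Theorem \ref{Th: Ham converse Noether}. That is, I would start from the definitions of $v_1$, $J_1$ and $F_1$ given in the statement (formula \eqref{eq: flux} for the flux) and verify the variational symmetry condition \eqref{eq: var sym}, namely $\D_{v_1} L - \D_t F_1 = 0$, as an identity of differential polynomials in $(x,\dot x, \ddot x)$.

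The first step is to recycle the computation already performed in the proof of Noether's theorem. Regardless of the provenance of the $V_i^{(1)}$, that computation gives the polynomial identity
\[
\D_{v_1} L = \D_t \!\left(\sum_{i=1}^N \frac{\pa L}{\pa \dot x_i} V_i^{(1)}\right) + \sum_{i=1}^N V_i^{(1)} \EE_i.
\]
Subtracting $\D_t F_1$ and using \eqref{eq: flux}, this collapses to
\[
\D_{v_1} L - \D_t F_1 = \D_t J_1 + \sum_{i=1}^N V_i^{(1)} \EE_i,
\]
so the claim reduces to showing that $J_1$ satisfies \eqref{eq: Noether integral char}, i.e., $\D_t J_1 = -\sum_i V_i^{(1)} \EE_i$. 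This is precisely the statement that $J_1$ is a Noether integral with the prescribed characteristic, and will simultaneously yield the Noether integral part of the theorem.

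The second step is a direct chain-rule computation of $\D_t J_1$ starting from $J_1(x,\dot x) = H_1(x, p(x,\dot x))$:
\[
\D_t J_1 = \sum_{i=1}^N \frac{\pa H_1}{\pa x_i}\, \dot x_i + \sum_{i=1}^N \frac{\pa H_1}{\pa p_i}\, \D_t p_i(x,\dot x).
\]
Here I substitute $\dot x_i = \pa H/\pa p_i$ from \eqref{eq: L to H}, and
\[
\D_t p_i = \D_t\!\left(\frac{\pa L}{\pa \dot x_i}\right) = \frac{\pa L}{\pa x_i} - \EE_i = -\frac{\pa H}{\pa x_i} - \EE_i,
\]
using \eqref{eq: EL 1} and then \eqref{eq: L to H} again. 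Recognizing $\pa H_1/\pa p_i = V_i^{(1)}$ by definition, these substitutions reorganize the right-hand side as $\{H_1, H\} - \sum_i V_i^{(1)} \EE_i$, which vanishes down to the $\EE_i$-piece by the hypothesis $\{H,H_1\} = 0$.

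The main obstacle, modest though it is, will be keeping the roles of the various variables straight: $H_1$ is a function on $T^*\mathcal{X}$, whereas $J_1$, $V_i^{(1)}$ and $F_1$ are differential polynomials on $T\mathcal{X}$ obtained via the Legendre transform \eqref{eq: Legendre}, while $\D_t p_i$ lives on $T^2\mathcal{X}$. One must check that chain-rule cross terms involving $\pa \dot x_k/\pa x_i$ and $\pa \dot x_k/\pa p_i$ either cancel, or can be bypassed by computing directly with $\D_t p_i$ as above, which is the cleanest route. No further input beyond the Legendre identities \eqref{eq: L to H}, the Euler-Lagrange equations, and the Poisson-commutativity of $H$ and $H_1$ is required.
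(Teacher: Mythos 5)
Your proposal is correct and follows essentially the same route as the paper: the paper likewise reduces the claim to verifying the integral-with-characteristic identity (\ref{eq: Noether integral char}) (it does so by citing Theorem \ref{Th: Noether}, part b), where you unwind that citation by redoing the Noether computation), and then establishes $\D_t J_1 = \{H_1,H\} - \sum_i V_i^{(1)}\EE_i$ by exactly the chain-rule argument you describe, substituting $\D_t p_i = \partial L/\partial x_i - \EE_i$ and the Legendre identities (\ref{eq: L to H}). Your closing remark that computing directly with $\D_t p_i$ bypasses the cross-term bookkeeping is precisely the device the paper uses.
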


\benpf
According to Theorem \ref{Th: Noether}, part b), it is enough to show that $J_1$ is an integral of motion of equations $\EE_i=0$ with the characteristic $\big(V_i^{(1)}\big)_{1\le i \le N}$, so that equation (\ref{eq: Noether integral char}) is satisfied. But this follows by a direct computation (in the following formulas we always assume that $(x,p)\in T^*\mathcal X$ and $(x,\dot x)\in T\mathcal X$ are related by the Legendre transformation (\ref{eq: Legendre})):
\begin{eqnarray*}
\D_{t} J_1  &=&  \sum_{i=1}^N \left(\frac{\pa H_1(x,p)}{\pa x_i} \dot x_i + \frac{\pa H_1(x,p)}{\pa p_i} \D_{t} p_i(x, \dot x) \right) \\
                  &=&  \sum_{i=1}^N \left(\frac{\pa H_1(x,p)}{\pa x_i}\dot x_i + \frac{\pa H_1(x,p)}{\pa p_i} \left(\frac{\pa L(x,\dot x)}{\pa x_i}-\EE_i\right) \right) \\
                 &=&  \sum_{i=1}^N \left(\frac{\pa H_1}{\pa x_i}\frac{\pa H}{\pa p_i}-\frac{\pa H_1}{\pa p_i}\frac{\pa H}{\pa x_i} \right)
                 -\sum_{i=1}^N\frac{\pa H_1(x,p)}{ \pa p_i}  \EE_i  \\
                & = & \{H_1,H\}-\sum_{i=1}^N V_i^{(1)}\EE_i\ ,
\end{eqnarray*}
where we used (\ref{eq: EL 1}) and (\ref{eq: L to H}). It remains to use (\ref{eq: Poisson bracket}). \endpf

\section{Commuting variational symmetries}
\label{Sect comm var sym}

We show that commutativity of variational symmetries provides us with an adequate framework to discuss integrability in the Lagrangian context. Recall that  usually the notion of integrability is expressed using the Hamiltonian language.

Assume that the Lagrange function $L(x,\dot x)$ admits two variational symmetries
\[
v_k=\sum_{i=1}^N V_i^{(k)}(x,\dot x)\frac{\partial}{\partial x_i}\ , \qquad k=1,2,
\]
with the corresponding fluxes $F_k(x,\dot x)$, so that
\begin{equation} \label{eq: var symmetries}
\D_{v_k} L(x,\dot x)-\D_t F_k(x,\dot x)=0, \qquad k=1,2.
\end{equation}
Recall \cite{Olver} that the commutator $[\D_{v_1},\D_{v_2}]$ is a differential operator which can be represented as $\D_{[v_1,v_2]}$, that is, the prolongation of the generalized vector field 
\beq \label{eq: commutator}
[v_1,v_2]=\sum_{i=1}^N \Big(\D_{v_1} V_i^{(2)}-\D_{v_2} V_i^{(1)}\Big)\frac{\partial}{\partial x^i}\ .
\eeq

\begin{definition}{\bf (Commuting variational symmetries)}  
We say that variational symmetries $v_1$, $v_2$ of the variational problem (\ref{lp}) {\rm{commute}} if 
\beq \label{eq: commuting sym}
\D_{v_1} V_i^{(2)}-\D_{v_2} V_i^{(1)} = \sum_{j=1}^N  r_{ij}(x,\dot x)\EE_j\ , \quad i=1,\ldots, N
\eeq
with some functions $r_{ij}(x,\dot x)$, so that $[v_1,v_2]=0$ on solutions of the Euler-Lagrange equations (\ref{eq: EL 1}).
\end{definition}
\begin{proposition}  \label{lemma skew}
For two commuting variational symmetries $v_1$, $v_2$, the functions $r_{ij}(x,\dot x)$ are skew-symmetric,
\beq \nn 
r_{ij}(x,\dot x)=-r_{ji}(x,\dot x).
\eeq
More precisely, setting $H_k(x,p)=J_k(x,\dot x)|_{\dot x=\dot x(x,p)}$, we have:
\beq  \label{eq: rij}
r_{ij}(x,\dot x)= \sum_{k,m=1}^N\left(
\frac{\pa^2 H_1}{ \pa p_i \pa p_m} \ \frac{\pa^2 L}{\pa \dot x_m \pa \dot x_k} \ \frac{\pa^2 H_2}{ \pa p_k \pa p_j} -
\frac{\pa^2 H_2}{ \pa p_i \pa p_m} \ \frac{\pa^2 L}{\pa \dot x_m \pa \dot x_k} \ \frac{\pa^2 H_1}{ \pa p_k \pa p_j}
\right).
\eeq
\end{proposition}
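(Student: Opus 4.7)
The plan is to exploit the fact that both sides of the commutativity condition (\ref{eq: commuting sym}) are at most linear in the accelerations $\ddot x$, so the identity forces the $\ddot x$-coefficients to match on each side. First I would isolate the $\ddot x_l$-coefficient in $\D_{v_1} V_i^{(2)} - \D_{v_2} V_i^{(1)}$. Since $V_j^{(k)}$ depends only on $(x,\dot x)$, the total derivative $\D_t V_j^{(k)}$ contributes the only $\ddot x$-terms, namely $\sum_l(\partial V_j^{(k)}/\partial\dot x_l)\ddot x_l$. Thus the $\ddot x_l$-coefficient of $\D_{v_1} V_i^{(2)} - \D_{v_2} V_i^{(1)}$ equals
\[
\sum_{j=1}^N\left(\frac{\partial V_i^{(2)}}{\partial \dot x_j}\frac{\partial V_j^{(1)}}{\partial \dot x_l} - \frac{\partial V_i^{(1)}}{\partial \dot x_j}\frac{\partial V_j^{(2)}}{\partial \dot x_l}\right),
\]
while the $\ddot x_l$-coefficient of $\EE_j$ is $-\partial^2 L/\partial\dot x_l\partial\dot x_j$. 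Matching coefficients of $\ddot x_l$ on the two sides of (\ref{eq: commuting sym}) yields
\[
\sum_{j=1}^N \left(\frac{\partial V_i^{(2)}}{\partial \dot x_j}\frac{\partial V_j^{(1)}}{\partial \dot x_l} - \frac{\partial V_i^{(1)}}{\partial \dot x_j}\frac{\partial V_j^{(2)}}{\partial \dot x_l}\right) = -\sum_{j=1}^N r_{ij}\,\frac{\partial^2 L}{\partial \dot x_l \partial \dot x_j},
\]
which determines the coefficients $r_{ij}$ uniquely thanks to the non-degeneracy assumption (\ref{eq: nondeg}).

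Next I would rewrite the characteristics using Theorem \ref{Th: Ham converse Noether}, specifically formula (\ref{eq: LQ1 to H1 p}), which gives $V_i^{(k)}(x,\dot x) = (\partial H_k/\partial p_i)(x,p(x,\dot x))$. Differentiating through the Legendre map $p_m = \partial L/\partial \dot x_m$ yields
\[
\frac{\partial V_i^{(k)}}{\partial \dot x_j} = \sum_{m=1}^N \frac{\partial^2 H_k}{\partial p_i \partial p_m}\,\frac{\partial^2 L}{\partial \dot x_m \partial \dot x_j}.
\]
Substituting this into the matching relation above and passing to matrix notation, with $A^{(k)}$ denoting the Hessian $(\partial^2 H_k/\partial p_i \partial p_j)$ and $B$ the Hessian $(\partial^2 L/\partial \dot x_i \partial \dot x_j)$, the relation reads $A^{(2)} B A^{(1)} B - A^{(1)} B A^{(2)} B = -RB$ (with $R=(r_{ij})$). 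Cancelling one factor of $B$ on the right by invertibility gives $R = A^{(1)} B A^{(2)} - A^{(2)} B A^{(1)}$, which in components is precisely formula (\ref{eq: rij}).

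Finally, skew-symmetry $r_{ij}=-r_{ji}$ is immediate: transposing the matrix identity and using that $A^{(1)}$, $A^{(2)}$ and $B$ are all symmetric gives $R^T = A^{(2)} B A^{(1)} - A^{(1)} B A^{(2)} = -R$; equivalently, swapping $i\leftrightarrow j$ in the explicit formula (\ref{eq: rij}) exchanges the two terms in the bracket. The main obstacle is not conceptual but purely bookkeeping: one must carefully track the chain rule through the Legendre transformation and the $\ddot x$-dependence through the two prolongations $\D_{v_k}$. Once everything is expressed in terms of the two Hessians $A^{(k)}$ and $B$, the identity condenses to the single commutator-like expression $R = A^{(1)} B A^{(2)} - A^{(2)} B A^{(1)}$, from which both claims follow mechanically.
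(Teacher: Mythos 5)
Your proof is correct and follows essentially the same route as the paper: both compare the coefficients of $\ddot x_\ell$ on the two sides of (\ref{eq: commuting sym}), convert $\partial V^{(k)}_i/\partial \dot x_j$ into Hessians of $H_k$ and $L$ via (\ref{eq: LQ1 to H1 p}) and the Legendre map, and cancel one factor of the non-degenerate Hessian of $L$ to obtain (\ref{eq: rij}). Your matrix formulation $R=A^{(1)}BA^{(2)}-A^{(2)}BA^{(1)}$ is merely a compact restatement of the paper's index computation, with the skew-symmetry made explicit by transposition.
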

\benpf
We compare the terms with $\ddot x_\ell$ on the both sides of equation (\ref{eq: commuting sym}). Such terms in $\D_{v_1} V_i^{(2)}-\D_{v_2} V_i^{(1)}$ come from 
\[
\sum_{k=1}^N\left(\big(\D_tV_k^{(1)}\big)\frac{\partial V_i^{(2)}}{\partial \dot x_k}-\big(\D_tV_k^{(2)}\big)\frac{\partial V_i^{(1)}}{\partial \dot x_k}\right),
\]
and are equal to 
\[
\sum_{k=1}^N\sum_{\ell=1}^N\left(\frac{\partial V_i^{(2)}}{\partial \dot x_k}\ \frac{\partial V_k^{(1)}}{\partial \dot x_\ell}
                                                   -\frac{\partial V_i^{(1)}}{\partial \dot x_k}\ \frac{\partial V_k^{(2)}}{\partial \dot x_\ell}\right)\! \ddot x_\ell\ .
\]
From (\ref{eq: LQ1 to H1 p}) we derive:
\[
\frac{\partial V_i^{(1)}}{\partial \dot x_k}=\sum_{m=1}^N \frac{\partial^2 H_1}{\partial p_i\partial p_m}\ \frac{\partial p_m}{\partial \dot x_k}.
\]
Thus, the terms with $\ddot x_\ell$ in $\D_{v_1} V_i^{(2)}-\D_{v_2} V_i^{(1)}$ are equal to
\[
\sum_{k=1}^N\sum_{\ell=1}^N \sum_{m=1}^N\sum_{j=1}^N\left(
   \frac{\partial^2 H_2}{\partial p_i\partial p_m}\ \frac{\partial p_m}{\partial \dot x_k}
   \ \frac{\partial^2 H_1}{\partial p_k\partial p_j}\ \frac{\partial p_j}{\partial \dot x_\ell}
    -\frac{\partial^2 H_1}{\partial p_i\partial p_m}\ \frac{\partial p_m}{\partial \dot x_k}
    \ \frac{\partial^2 H_2}{\partial p_k\partial p_j}\ \frac{\partial p_j}{\partial \dot x_\ell}\right)\! \ddot x_\ell\ .
\]
This has to be compared with the terms with $\ddot x_\ell$ in $\sum_{j=1}^N r_{ij}\EE_j$. According to (\ref{eq: EL 1}), we have:
\[
\EE_j=-\sum_{\ell=1}^N \frac{\partial p_j}{\partial \dot x_\ell}\ \ddot x_\ell+\ldots.
\]
With a reference to the non-degeneracy condition (\ref{eq: nondeg}), we arrive at (\ref{eq: rij}).
\endpf
\begin{proposition} \label{prop comm symm}
If two variational symmetries $v_1$, $v_2$ commute, then their fluxes $F_1$, $F_2$ satisfy:
\beq
\D_{v_1} F_2-\D_{v_2} F_1=c_{12}+\sum_{i=1}^Np_i\left(\D_{v_1} V_i^{(2)}-\D_{v_2} V_i^{(1)}\right),
\eeq
where $c_{12}={\rm const}$. In particular, on solutions of the Euler-Lagrange equations $\EE_j=0$ we have:
\beq \label{eq: D1F2-D2F1}
\D_{v_1} F_2-\D_{v_2} F_1=c_{12}\ .
\eeq
\end{proposition}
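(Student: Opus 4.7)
\benpf (Proposal)
The plan is to compute $\D_t(\D_{v_1}F_2-\D_{v_2}F_1)$ in two ways and compare. I would start from the two variational symmetry identities
$$\D_{v_2}L-\D_tF_2=0,\qquad \D_{v_1}L-\D_tF_1=0,$$
which hold identically on the jet space. Apply $\D_{v_1}$ to the first, $\D_{v_2}$ to the second, subtract, and use the two standard facts that evolutionary vector fields commute with the total time derivative (so $[\D_t,\D_{v_k}]=0$) and that the commutator of two prolongations is the prolongation of the commutator vector field, $[\D_{v_1},\D_{v_2}]=\D_{[v_1,v_2]}$ with $[v_1,v_2]$ given by \eqref{eq: commutator}. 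This yields the clean identity
$$\D_{[v_1,v_2]}L=\D_t\bigl(\D_{v_1}F_2-\D_{v_2}F_1\bigr).$$

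Next I would expand the left-hand side. Writing $W_i:=\D_{v_1}V_i^{(2)}-\D_{v_2}V_i^{(1)}$, the prolongation formula gives
$$\D_{[v_1,v_2]}L=\sum_{i=1}^N W_i\frac{\partial L}{\partial x_i}+\sum_{i=1}^N (\D_tW_i)\frac{\partial L}{\partial \dot x_i}.$$
Using the Euler-Lagrange identity $\partial L/\partial x_i=\EE_i+\D_tp_i$ and $\partial L/\partial \dot x_i=p_i$, this rearranges as
$$\D_{[v_1,v_2]}L=\sum_{i=1}^N W_i\EE_i+\D_t\!\Bigl(\sum_{i=1}^N p_iW_i\Bigr).$$
Combined with the previous identity I obtain
$$\D_t\!\Bigl(\D_{v_1}F_2-\D_{v_2}F_1-\sum_{i=1}^N p_iW_i\Bigr)=\sum_{i=1}^N W_i\EE_i.$$

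Now I invoke the commutativity hypothesis \eqref{eq: commuting sym}, $W_i=\sum_j r_{ij}\EE_j$. By Proposition \ref{lemma skew} the matrix $(r_{ij})$ is skew-symmetric, so
$$\sum_{i=1}^N W_i\EE_i=\sum_{i,j=1}^N r_{ij}\EE_i\EE_j=0.$$
This shows that the quantity
$$G:=\D_{v_1}F_2-\D_{v_2}F_1-\sum_{i=1}^N p_i W_i$$
satisfies $\D_tG=0$ identically. Inspection of the definitions shows that $G$ is a differential polynomial of finite order (at most $\ddot x$, since $F_k$, $V_i^{(k)}$ and $p_i$ depend only on $(x,\dot x)$, and $\D_{v_k}$ and $\D_t$ each raise the jet order by one). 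A function $G(x,\dot x,\ddot x,\ldots)$ on the jet space whose total derivative $\D_tG$ vanishes identically must be independent of the highest-order jet variable on which it depends (since $\D_t$ introduces a fresh jet variable through that term with a nonvanishing coefficient), and iterating this observation reduces $G$ to a constant $c_{12}$. This gives precisely the desired formula, and restricting to solutions of $\EE_j=0$ kills the $W_i$ terms and yields \eqref{eq: D1F2-D2F1}.

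The main obstacle I anticipate is the last jet-theoretic step: the statement ``$\D_tG\equiv 0$ implies $G=\mathrm{const}$'' needs the mild care of verifying that the coefficient of the highest jet variable in $\D_tG$ is $\partial G/\partial (\text{highest jet})$, and then recursing. Everything else is a bookkeeping exercise, provided one carefully invokes $[\D_t,\D_{v_k}]=0$ and $[\D_{v_1},\D_{v_2}]=\D_{[v_1,v_2]}$ from \cite{Olver} and the skew-symmetry of $r_{ij}$ from Proposition \ref{lemma skew}.
\endpf
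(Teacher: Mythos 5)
Your proposal is correct and follows essentially the same route as the paper: compute $\D_t(\D_{v_1}F_2-\D_{v_2}F_1)=[\D_{v_1},\D_{v_2}]L=\D_{[v_1,v_2]}L$, expand via the prolongation formula and the Euler--Lagrange identity $\partial L/\partial x_i=\EE_i+\D_t p_i$, and kill the residual term $\sum_{i,j}r_{ij}\EE_i\EE_j$ by the skew-symmetry of $r_{ij}$ from Proposition \ref{lemma skew}. Your extra care in the final jet-theoretic step (that $\D_tG\equiv 0$ forces $G$ to be a constant) is a detail the paper leaves implicit, but it does not change the argument.
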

\benpf
Using definition  (\ref{eq: var symmetries}) of variational symmetries and the fact that evolutionary vector fields $\D_{v_k}$ commute with $\D_t$ (see \cite{Olver}), we find:
\begin{eqnarray*}
\lefteqn{\D_t\Big(\D_{v_1} F_2-\D_{v_2} F_1\Big) \ = \ [\D_{v_1},\D_{v_2}] L \ = \ \D_{[v_1,v_2]}L}\\
   & = & \sum_{i=1}^N  \bigg(\!\!\left(\D_{v_1} V_i^{(2)}-\D_{v_2} V_i^{(1)}\right)\frac{\partial L}{\partial x_i}
   +\D_t\! \left(\D_{v_1} V_i^{(2)}-\D_{v_2} V_i^{(1)}\right)\frac{\partial L}{\partial \dot x_i}\bigg)\\
    & = & \sum_{i=1}^N  \bigg(\!\! \left(\D_{v_1} V_i^{(2)}-\D_{v_2} V_i^{(1)}\right)\! (\EE_i 
    + \D_t p_i)+\D_t\! \left(\D_{v_1} V_i^{(2)}-\D_{v_2} V_i^{(1)}\right)p_i\bigg)\\
    & = & \sum_{i=1}^N  \left(\D_{v_1} V_i^{(2)}-\D_{v_2} V_i^{(1)}\right)\EE_i 
    + \D_t \bigg(\sum_{i=1}^N p_i\left(\D_{v_1} V_i^{(2)}-\D_{v_2} V_i^{(1)}\right)\!\bigg).
\end{eqnarray*}
The first sum on the right-hand side is equal to $\sum_{i=1}^N\sum_{j=1}^N r_{ij}\EE_j\EE_i$ and vanishes due to the skew-symmetry of $r_{ij}$ (see Proposition \ref{lemma skew}).
\endpf

\begin{theorem} \label{Th: Ham converse several Noether} {\bf (From commuting variational symmetries to commuting Hamiltonian flows)}
Let $v_k$ $(k=1,2)$ be commuting variational symmetries of the Lagrange function $L(x,\dot x)$ with the Noether integrals $J_k(x,\dot x)$. Set $H_k(x,p)=J_k(x,\dot x)|_{\dot x=\dot x(x,p)}$. Then
\beq \label{eq: Poisson bracket kl}
\{ H_1,H_2\}=c_{12}\ ,
\eeq
with the constant $c_{12}$ from Proposition \ref{prop comm symm}. Thus, Hamiltonian flows with the Hamilton functions $H_k$ commute.
\end{theorem}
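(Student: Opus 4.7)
The plan is to compute $\{H_1,H_2\}$ on the Lagrangian side by exploiting the fact that the formulas \eqref{eq: LQ1 to H1 x}, \eqref{eq: LQ1 to H1 p} established in the proof of Theorem \ref{Th: Ham converse Noether} apply verbatim to each $H_k(x,p)=J_k(x,\dot x)|_{\dot x=\dot x(x,p)}$, $k=1,2$. Substituting them into the canonical Poisson bracket (with $p_j=\partial L/\partial\dot x_j$) reduces, after cancellation, to the purely Lagrangian expression
\[
\{H_1,H_2\}=\sum_{i,j} p_j\!\left(V_i^{(2)}\tfrac{\partial V_j^{(1)}}{\partial x_i}-V_i^{(1)}\tfrac{\partial V_j^{(2)}}{\partial x_i}\right)+\sum_i\!\left(V_i^{(1)}\tfrac{\partial F_2}{\partial x_i}-V_i^{(2)}\tfrac{\partial F_1}{\partial x_i}\right).
\]
Note that no $\partial/\partial p_i$ remain, and no $\ddot x$ appear either.

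The next step is to show that this expression is exactly the constant produced by Proposition \ref{prop comm symm}. I would expand $\D_{v_k} F_\ell$ according to the prolongation \eqref{dgwrhg} acting on $F_\ell(x,\dot x)$, splitting it into an $x$-derivative part $\sum_i V_i^{(k)}\partial F_\ell/\partial x_i$ and a $\dot x$-derivative part $\sum_i (\D_t V_i^{(k)})\,\partial F_\ell/\partial\dot x_i$. Lemma \ref{lemma F} rewrites $\partial F_\ell/\partial\dot x_i=\sum_j p_j\,\partial V_j^{(\ell)}/\partial\dot x_i$, and the analogous decomposition of $\D_{v_k} V_j^{(\ell)}$ lets the $\dot x$-derivative part be rewritten as $\sum_j p_j\,\D_{v_k} V_j^{(\ell)}-\sum_{i,j}p_j V_i^{(k)}\partial V_j^{(\ell)}/\partial x_i$. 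Feeding this into Proposition \ref{prop comm symm} and cancelling the common terms $\sum_j p_j(\D_{v_1}V_j^{(2)}-\D_{v_2}V_j^{(1)})$ on both sides produces the identity
\[
c_{12}=\sum_i\!\left(V_i^{(1)}\tfrac{\partial F_2}{\partial x_i}-V_i^{(2)}\tfrac{\partial F_1}{\partial x_i}\right)-\sum_{i,j}p_j\!\left(V_i^{(1)}\tfrac{\partial V_j^{(2)}}{\partial x_i}-V_i^{(2)}\tfrac{\partial V_j^{(1)}}{\partial x_i}\right),
\]
whose right-hand side matches the Lagrangian expression for $\{H_1,H_2\}$ displayed above. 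This yields $\{H_1,H_2\}=c_{12}$, and the commutativity of the two Hamiltonian flows follows at once since a constant Poisson-commutes with every function.

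The main obstacle is purely bookkeeping: one must carefully track which terms sit under $\sum_i$ versus $\sum_{i,j}$, apply Lemma \ref{lemma F} at exactly the right moment to trade $\partial F_\ell/\partial\dot x_i$ for momenta times velocity-gradients of the characteristics, and correctly identify the resulting polynomial in $p$ as the $p$-linear content of the difference $\D_{v_1}F_2-\D_{v_2}F_1$ from Proposition \ref{prop comm symm}. There is no residual contribution from the $\EE_i$ on the right-hand side of \eqref{eq: commuting sym}: such a contribution would appear multiplied by another $\EE_j$ (via the skew-symmetry of $r_{ij}$ established in Proposition \ref{lemma skew}), and so does not affect the identity at the level of functions of $(x,\dot x)$ alone.
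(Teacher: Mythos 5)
Your proposal is correct and follows essentially the same route as the paper's own proof: both rest on \eqref{eq: LQ1 to H1 x}, \eqref{eq: LQ1 to H1 p}, Lemma \ref{lemma F}, and the cancellation of the common term $\sum_j p_j\big(\D_{v_1}V_j^{(2)}-\D_{v_2}V_j^{(1)}\big)$ against Proposition \ref{prop comm symm}. The only difference is the direction of reading (you start from $\{H_1,H_2\}$ and land on $c_{12}$, the paper starts from $\D_{v_1}F_2-\D_{v_2}F_1$ and lands on $\{H_1,H_2\}$), which is immaterial.
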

\benpf
We compute:
\begin{eqnarray*}
\D_{v_1}F_2-\D_{v_2} F_1 & = &  
   \sum_{i=1}^N \left(V_i^{(1)}\frac{\partial F_2}{\partial x_i}-V_i^{(2)}\frac{\partial F_1}{\partial x_i}\right)
           +\sum_{i=1}^N \left(\big(\D_tV_i^{(1)}\big)\frac{\partial F_2}{\partial \dot x_i}-\big(\D_tV_i^{(2)}\big)\frac{\partial F_1}{\partial \dot x_i}\right).
\end{eqnarray*}
Using Lemma \ref{lemma F}, we find:
\begin{eqnarray*}
\D_{v_1}F_2-\D_{v_2} F_1 & = & \sum_{i=1}^N \left(V_i^{(1)}\frac{\partial F_2}{\partial x_i}-V_i^{(2)}\frac{\partial F_1}{\partial x_i}\right)\nn \\
       &&    +\sum_{i=1}^N \sum_{j=1}^Np_j\left(\big(\D_tV_i^{(1)}\big)\frac{\partial V_j^{(2)}}{\partial \dot x_i}-\big(\D_tV_i^{(2)}\big)\frac{\partial V_j^{(1)}}{\partial \dot x_i}\right)\\
      & = & \sum_{i=1}^N \left(V_i^{(1)}\frac{\partial F_2}{\partial x_i}-V_i^{(2)}\frac{\partial F_1}{\partial x_i}\right)
               -\sum_{i=1}^N \sum_{j=1}^Np_j\left(V_i^{(1)}\frac{\partial V_j^{(2)}}{\partial x_i}-V_i^{(2)}\frac{\partial V_j^{(1)}}{\partial x_i}\right) \\
      &    & +\sum_{j=1}^Np_j\Big(\D_{v_1} V_j^{(2)}-\D_{v_2} V_j^{(1)}\Big) .           
\end{eqnarray*}
Comparing with Proposition \ref{prop comm symm}, we find:
\[
 \sum_{i=1}^N \left(V_i^{(2)}\bigg(\sum_{j=1}^N p_j\frac{\partial V_j^{(1)}}{\partial x_i}-\frac{\partial F_1}{\partial x_i}\bigg)
                      -V_i^{(1)}\bigg(\sum_{j=1}^N p_j\frac{\partial V_j^{(2)}}{\partial x_i}-\frac{\partial F_2}{\partial x_i}\bigg)\right) = c_{12}\ .           
\]
According to (\ref{eq: LQ1 to H1 x}), (\ref{eq: LQ1 to H1 p}), this can be put as 
\[
\sum_{i=1}^N \left(\frac{\partial H_2}{\partial p_i} \frac{\partial H_1}{\partial x_i}-
                      \frac{\partial H_1}{\partial p_i} \frac{\partial H_2}{\partial x_i}\right) =\{H_1,H_2\}=c_{12}\ ,
\]
which finishes the proof.
\endpf

\begin{theorem} \label{Th: Ham converse several Noether} {\bf (From commuting Hamiltonian flows to commuting variational symmetries)}
Let $H_1, H_2: T^*\mathcal{X}\to \R$ be integrals of motion of the Hamiltonian flow with the Hamilton function $H$, such that the corresponding Hamiltonian flows commute, so that (\ref{eq: Poisson bracket kl}) is satisfied.
Set
$$
V_i^{(k)}(x, \dot x)= \left.\frac{\pa H_k(x,p)}{ \pa p_i} \right|_{p=p(x,\dot x)}\ ,\quad k=1,2,
$$
and define variational symmetries $v_k=\sum_{i=1}^N V_i^{(k)}\partial /\partial x_i$ $(k=1,2)$ of the Lagrangian problem (\ref{lp}). Then $v_1$, $v_2$ commute on solutions of the Lagrangian problem (\ref{lp}).
\end{theorem}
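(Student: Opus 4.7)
Plan: The previous theorem already guarantees that each $v_k$ is a variational symmetry, with characteristic $V_i^{(k)}=\partial H_k/\partial p_i|_{p=p(x,\dot x)}$ and Noether integral $J_k(x,\dot x)=H_k(x,p(x,\dot x))$. What remains is to verify
\[
W_i \;:=\; \D_{v_1}V_i^{(2)}-\D_{v_2}V_i^{(1)} \;=\; \sum_{j=1}^N r_{ij}(x,\dot x)\,\EE_j
\]
for some $r_{ij}(x,\dot x)$; the natural candidate for $r_{ij}$ is the expression (\ref{eq: rij}) of Proposition \ref{lemma skew}.

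Since $W_i$ is polynomial of degree at most one in $\ddot x$, I would split it and $\sum_j r_{ij}\EE_j$ into $\ddot x$-linear and $\ddot x$-independent parts and match them separately. The $\ddot x$-linear parts agree identically: the identification of the coefficient of $\ddot x_\ell$ is exactly the computation performed in the proof of Proposition \ref{lemma skew}, which uses only the form $V_j^{(k)}=\partial H_k/\partial p_j|_{p=p(x,\dot x)}$, the chain rule, and the non-degeneracy (\ref{eq: nondeg}), without invoking the hypothesis $\{H_1,H_2\}=c_{12}$.

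The $\ddot x$-independent parts require the hypothesis. I would translate both $W_i$ and $\sum_j r_{ij}\EE_j$ into phase-space variables via the Legendre map, using (\ref{eq: L to H}), (\ref{eq: LQ1 to H1 x}), (\ref{eq: LQ1 to H1 p}) to convert $\dot x_\ell$, $\partial L/\partial x_\ell$, and $\partial F_k/\partial x_i$ into expressions in $\partial H_k/\partial p_\ell$ and $\partial H_k/\partial x_\ell$. The terms involving $\{H_k,H\}$ vanish since each $H_k$ is an integral of the $H$-flow, while the remaining terms collect into $\partial\{H_1,H_2\}/\partial p_i|_{p=p(x,\dot x)}$, which is zero by hypothesis. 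The main obstacle is the chain-rule bookkeeping required to recognize this remainder as the $p_i$-derivative of a Poisson bracket; conceptually, the computation reflects the fact that $\{H_1,H_2\}=c_{12}$ implies $[X_{H_1},X_{H_2}]=X_{\{H_1,H_2\}}=0$ on $T^*\mathcal X$, and this commutativity transports through the Legendre diffeomorphism to commutativity of $v_1,v_2$ on solutions of $\EE=0$ (a viewpoint clarified in Section \ref{Sect phase space}).
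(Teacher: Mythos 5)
Your proposal is correct and follows essentially the same route as the paper: the paper likewise computes $\D_{v_1}V_i^{(2)}$ in phase-space variables via the Legendre identities (\ref{eq: L to H}), (\ref{eq: LQ1 to H1 p}), uses $\{H_1,H\}=0$ differentiated with respect to $p_k$ to eliminate the mixed terms, and after antisymmetrizing arrives at $\D_{v_1}V_i^{(2)}-\D_{v_2}V_i^{(1)}=\partial_{p_i}\{H_1,H_2\}+\sum_j r_{ij}\EE_j$ with $r_{ij}$ exactly as in (\ref{eq: rij}). Your reorganization into $\ddot x$-linear and $\ddot x$-independent parts is only a cosmetic difference; the key steps coincide.
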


\benpf
We compute:
\begin{eqnarray*}
\D_{v_1} V_i^{(2)} 
   & = &  \sum_{k=1}^N \left( \frac{\pa V_i^{(2)}}{\pa x_k} V_k^{(1)}  + \frac{\pa V_i^{(2)}}{\pa \dot x_k} \big(\D_t V_k^{(1)}\big)\right) \\
   & = & \sum_{k=1}^N \frac{\pa V_i^{(2)}}{\pa x_k} V_k^{(1)} 
             +\sum_{k=1}^N \sum_{\ell=1}^N
             \frac{\pa V_i^{(2)}}{\pa \dot x_k} \left( \frac{\pa V_k^{(1)}}{\pa x_\ell} \dot x_\ell + \frac{\pa V_k^{(1)}}{\pa \dot x_\ell} \ddot x_\ell \right) \! .
 \end{eqnarray*}
 Differentiating the definition $V_i^{(k)}=\partial H_k/\partial p_i$, we find:
\begin{eqnarray*}
\lefteqn{ \D_{v_1} V_i^{(2)}  
    =  \sum_{k=1}^N \left( \frac{\pa^2 H_2}{\pa p_i \pa x_k} 
            +\sum_{m=1}^N  \frac{\pa^2 H_2}{ \pa p_i \pa p_m} \ \frac{\pa p_m}{\pa x_k}\right)\frac{\pa H_1}{\pa p_k} }\nn \\
   &    & +\sum_{k=1}^N \sum_{m=1}^N \frac{\pa^2 H_2}{ \pa p_i \pa p_m} \ \frac{\pa p_m}{\pa \dot x_k}
             \sum_{\ell=1}^N  \left(\frac{\pa^2 H_1}{ \pa p_k \pa x_\ell} \ \dot x_\ell 
            +\sum_{j=1}^N \frac{\pa^2 H_1}{ \pa p_k \pa p_j} \bigg( \frac{\pa p_j }{\pa x_\ell}\ \dot x_\ell +\frac{\pa p_j }{\pa \dot x_\ell} \ \ddot x_\ell \bigg)\right).
\end{eqnarray*}
In the last line we use 
$$
\sum_{\ell=1}^N \left( \frac{\pa p_j}{\pa x_\ell} \ \dot x_\ell+ \frac{\pa p_j}{\pa \dot x_\ell} \ \ddot x_\ell \right)=\D_tp_j=
\frac{\pa L}{\pa x_j}-\EE_j\ ,
$$
and obtain:
\begin{eqnarray*}
\lefteqn{ \D_{v_1} V_i^{(2)}  
    =  \sum_{k=1}^N \left( \frac{\pa^2 H_2}{\pa p_i \pa x_k} 
            +\sum_{m=1}^N  \frac{\pa^2 H_2}{ \pa p_i \pa p_m} \ \frac{\pa p_m}{\pa x_k}\right)\frac{\pa H_1}{\pa p_k} }\nn \\
   &    & +\sum_{k=1}^N \sum_{m=1}^N \frac{\pa^2 H_2}{ \pa p_i \pa p_m} \ \frac{\pa p_m}{\pa \dot x_k}
             \left( \sum_{\ell=1}^N \frac{\pa^2 H_1}{ \pa p_k \pa x_\ell} \ \dot x_\ell 
            +\sum_{j=1}^N \frac{\pa^2 H_1}{ \pa p_k \pa p_j} \ \frac{\pa L}{\pa x_j} -\sum_{j=1}^N \frac{\pa^2 H_1}{ \pa p_k \pa p_j} \EE_j\right).
\end{eqnarray*}
In the last line we replace $\dot x_\ell$ by $\pa H / \pa p_\ell$ and $\pa L/\pa x_j$ by $-\pa H / \pa x_j$:
\begin{eqnarray*}
\D_{v_1} V_i^{(2)}
  & = & \sum_{k=1}^N \frac{\pa^2 H_2}{\pa p_i \pa x^k}\ \frac{\pa H_1}{\pa p_k}
          +\sum_{k=1}^N \sum_{m=1}^N \frac{\pa^2 H_2}{ \pa p_i \pa p_m} \ \frac{\pa^2 L}{\pa \dot x_m\pa x_k} \ \frac{\pa H_1}{\pa p_k}  \\
  &    & +\sum_{k=1}^N \sum_{m=1}^N \frac{\pa^2 H_2}{ \pa p_i \pa p_m} \  \frac{\pa^2 L}{\pa \dot x_m\pa \dot x_k}
  \sum_{j=1}^N \left(\frac{\pa^2 H_1}{  \pa p_k\pa x_j} \  \frac{\pa H}{\pa p_j}  - \frac{\pa^2 H_1}{ \pa p_k \pa p_j} \ \frac{\pa H}{\pa x_j} \right)  \\
  &    & -\sum_{k=1}^N \sum_{m=1}^N \sum_{j=1}^N \frac{\pa^2 H_2}{ \pa p_i \pa p_m} \ \frac{\pa^2 L}{\pa \dot x_m\pa \dot x_k} \  
            \frac{\pa^2 H_1}{ \pa p_k \pa p_j}  \ \EE_j\ .
\end{eqnarray*}
In the second line of the above formula we use the identity
\[
\sum_{j=1}^N \left(\frac{\pa^2 H_1}{  \pa p_k\pa x_j} \  \frac{\pa H}{\pa p_j}  - \frac{\pa^2 H_1}{ \pa p_k \pa p_j} \ \frac{\pa H}{\pa x_j} \right) =
\sum_{j=1}^N \left(\frac{\pa H_1}{\pa p_j} \  \frac{\pa^2 H}{\pa p_k \pa x_j}  - \frac{\pa H_1}{ \pa x_j} \ \frac{\pa^2 H}{\pa p_k\pa p_j} \right),
\]
which is obtained from
\[
 \{H_1,H\} = \sum_{j=1}^N \left( \frac{\pa H_1}{\pa x_j} \ \frac{\pa H}{\pa p_j}- \frac{\pa H_1}{\pa p_j}\  \frac{\pa H}{\pa x_j}\right)=0
\]
by differentiation with respect to $p_k$. Thus, we find:
\begin{eqnarray*}
\D_{v_1} V_i^{(2)}
  & = & \sum_{k=1}^N \frac{\pa^2 H_2}{\pa p_i \pa x_k}\ \frac{\pa H_1}{\pa p_k}
          +\sum_{k=1}^N \sum_{m=1}^N \frac{\pa^2 H_2}{ \pa p_i \pa p_m} \ \frac{\pa^2 L}{\pa \dot x_m\pa x_k} \cdot \frac{\pa H_1}{\pa p_k}  \\
  &    & +\sum_{k=1}^N \sum_{m=1}^N \frac{\pa^2 H_2}{ \pa p_i \pa p_m} \  \frac{\pa^2 L}{\pa \dot x_m\pa \dot x_k}
  \sum_{j=1}^N \left(\frac{\pa H_1}{\pa p_j} \  \frac{\pa^2 H}{\pa p_k \pa x_j}  - \frac{\pa H_1}{ \pa x_j} \ \frac{\pa^2 H}{\pa p_k\pa p_j} \right),  \\
  &    & -\sum_{k=1}^N \sum_{m=1}^N \sum_{j=1}^N \frac{\pa^2 H_2}{ \pa p_i \pa p_m} \ \frac{\pa^2 L}{\pa \dot x_m\pa \dot x_k} \  
            \frac{\pa^2 H_1}{ \pa p_k \pa p_j}  \ \EE_j\ .
\end{eqnarray*}
In the second line of the above formula we use the following identities:
\beq \label{i2}
\sum_{k=1}^N \frac{\pa^2 L}{\pa \dot x_m \pa \dot x_k} \ \frac{\pa^2 H}{ \pa p_k \pa x_j}=-\frac{\pa^2 L}{\pa \dot x_m \pa x_j},
\eeq
and
\beq \label{i1}
\sum_{k=1}^N  \frac{\pa^2 L}{\pa \dot x_m \pa \dot x_k} \ \frac{\pa^2 H}{\pa p_k \pa p_j}=\delta_{mj}.
\eeq
Both are easily derived by differentiating the identity 
$$
\frac{\pa L}{\pa \dot x_m}\Big(x,\frac{\pa H(x,p)}{\pa p}\Big)=p_m
$$
with respect to $x_j$ and to $p_j$. Using (\ref{i2}) and (\ref{i1}), we finally obtain:
\begin{eqnarray} \label{55}
\D_{v_1} V_i^{(2)} & = & \sum_{k=1}^N \frac{\pa^2 H_2}{\pa p_i \pa x_k}  \ \frac{\pa H_1}{\pa p_k}
-\sum_{k=1}^N \frac{\pa^2 H_2}{ \pa p_i \pa p_k} \ \frac{\pa H_1}{\pa x_k}\nn \\
  & & -\sum_{k=1}^N \sum_{m=1}^N \sum_{j=1}^N \frac{\pa^2 H_2}{ \pa p_i \pa p_m} \ \frac{\pa^2 L}{\pa \dot x_m\pa \dot x_k} \  
            \frac{\pa^2 H_1}{ \pa p_k \pa p_j}  \ \EE_j\ .
\end{eqnarray}
By interchanging the roles of indices 1 and 2, and by subtracting the resulting formula from the previous one, we come to the final result:
\[
 \D_{v_1} V_i^{(2)}-\D_{v_2} V_i^{(1)}=\frac{\pa}{\pa p_i}  \{H_1,H_2\}+\sum_{j=1}^N r_{ij}\EE_j\ ,
\]
with $r_{ij}$ as given in (\ref{eq: rij}). This proves the theorem.
\endpf

\section{From variational symmetries to a multi-time 1-form}
\label{sect from to}

Now we would like to promote an alternative point of view. In the preceding presentation, the property of a generalized vector field $v_1$ to be a variational symmetry was mainly understood as an algebraic property of an operator $\D_{v_1}$ acting on differential polynomials (functions of $x$, $\dot x$, $\ddot x$ etc.), where $x$ was interpreted as a function of a single time $t$. However, this interpretation does not incorporate the view of a symmetry as a commuting flow. This is the point of view we would like to adopt now. 

Suppose that a Lagrangian problem (\ref{lp}) admits $m$ pairwise commuting variational symmetries $v_1,\ldots,v_m$. We interpret them as $m$ flows 
\begin{equation} \label{eq: new flows}
(x_i)_{t_k}= V_i^{(k)} (x, \dot x), \qquad i=1,\ldots,N, \quad k=1,\ldots,m.
\end{equation}
These flows commute when restricted to solutions of the variational problem (\ref{lp}) and therefore govern the evolution of the fields $x$ which can be now interpreted as functions of $m+1$ independent variables, $x=x(t,t_1,\ldots,t_m)$. Moreover, we consider derivation operators acting on differential polynomials which are now functions of $x$ and mixed derivatives of $x$ of all orders with respect to $t$ and to all $t_1,\ldots,t_m$ (derivatives with respect to $t$ being still denoted by dots). We will use the standard multi-index notation for such derivatives, with multi-indices $I=(i_0,i_1,\ldots,i_m)\in (\mathbb Z_{\ge 0})^{m+1}$. The operators of full derivatives $\D_{t}$ and $\D_{t_k}$ are defined as usual:
\begin{eqnarray*}
\D_{t} & = & \sum_{i=1}^N \sum_{I} (x_i)_{I+ e_0} \ \frac{\pa}{\pa (x_i)_I}=\sum_{i=1}^N \left( \dot x_i \ \frac{\pa}{\pa x_i}+\ddot x_i \ \frac{\pa}{\pa \dot x_i}
    +\sum_{\ell=1}^m (\dot x_i)_{t_\ell} \ \frac{\pa}{\pa (x_i)_{t_\ell}}+\ldots \right),\\
\D_{t_k} & = &  \sum_{i=1}^N \sum_{I} (x_i)_{I+ e_k} \ \frac{\pa}{\pa (x_i)_I}=\sum_{i=1}^N \left( (x_i)_{t_k} \ \frac{\pa}{\pa x_i}+(\dot x_i)_{t_k} \ \frac{\pa}{\pa \dot x_i}
    +\sum_{\ell=1}^m (x_i)_{t_\ell t_k} \ \frac{\pa}{\pa (x_i)_{t_\ell}}+\ldots \right).\qquad
\end{eqnarray*}
Operators $\D_{t_k}$ come to replace $\D_{v_k}$. In particular, when acting on a function depending on $x,\dot x$ only, like $L(x,\dot x)$ or $F_\ell(x,\dot x)$, we have:
\beq
\D_{t_k}-\D_{v_k}=\sum_{i=1}^N\big((x_i)_{t_1}-V_i^{(k)}(x,\dot x)\big)\frac{\partial}{\partial x_i}+
\sum_{i=1}^N\D_t\big((x_i)_{t_1}-V_i^{(k)}(x,\dot x)\big)\frac{\partial}{\partial \dot x_i}\ ,
\eeq
so that results coincide on solutions of the flows (\ref{eq: new flows}).

Upon the replacement of $\D_{v_k}$ by $\D_{t_k}$, the defining formula (\ref{eq: var symmetries}) of a variational symmetries would read 
$$
\D_{t_k} L(x,\dot x)-\D_{t} F_k(x,\dot x)=0,
$$
while the formulas similar to (\ref{eq: D1F2-D2F1}) would read
$$
\D_{t_k} F_\ell(x,\dot x)-\D_{t_\ell} F_k(x,\dot x)=c_{k\ell}.
$$
An important insight is that the latter relations are formally nothing but the coefficients of $\dd \mathcal L$ for the 1-form
\beq \label{eq: wrong form}
\mathcal L=L(x,\dot x)\dd t + F_1(x,\dot x)\dd t_1+\ldots+F_m(x,\dot x)\dd t_m.
\eeq
Of course, all these relations are satisfied not as algebraic identities, but only on solutions of a consistent system consisting of differential equations (\ref{eq: EL 1}) and (\ref{eq: new flows}).

\section{One-dimensional pluri-Lagrangian systems}
\label{Sect pluri}

In \cite{S1}, a theory of the so called {\em pluri-Lagrangian problems} which consist in finding critical points for action functionals associated with such forms, has been developed. Here are the main positions of this theory. In doing this, we do not single out the time $t$ playing a special role. Thus, in the present section we consider functions on the $m$-dimensional time $(t_1,\ldots,t_m)\in\mathbb R^m$.

\begin{definition} {\bf (Pluri-Lagrangian problem)} Consider a $1$-form on $\R^m$ given by
\beq \label{sgfg}
\mathcal{L}=\mathcal{L}[x, x_{t_1}, \dots, x_{t_m}]=\sum_{k=1}^m L_k (x, x_{t_1},\ldots,x_{t_m}) \dd t_k\ ,
\eeq
depending on (the first jet of) a function $x:\R^m\to \mathcal{X}$.
For any smooth curve $\Gamma:[0,1]\to \R^m$,  define the action functional
\beq
S_\Gamma= \int_\Gamma\mathcal{L}.
\eeq
A {\rm{pluri-Lagrangian problem}} for the 1-form $\mathcal L$ consists in finding functions $x:\R^m\to \mathcal{X}$ which deliver critical point of $S_\Gamma$ 
for any $\Gamma$ (for fixed values of $x$ at the endpoints of $\Gamma$).
\end{definition}

\begin{theorem} \label{Th multi-time EL eqs} {\bf (Multi-time Euler-Lagrange equations)}
A function $x:\R\to\mathcal X$ solves the pluri-Lagrangian problem for the $1$-form $\mathcal L$, if and only if it satisfies the following differential equations, called the {\em multi-time Euler-Lagrange equations}:

\beq \label{eq: pluri 1}
\frac{\pa L_k}{ \pa (x_i)_{t_\ell}}=0, \qquad \forall i=1,\dots, N, \quad  \forall k,\ell\in\{1,\dots, m\} \;\;{\text with} \;\; k\neq \ell;
\eeq

\beq\label{eq: pluri 2}
\frac{\partial L_1}{\partial (x_i)_{t_1}}=\dots=
\frac{\partial L_m}{\partial (x_i)_{t_m}}, \qquad \forall i=1,\dots, N;
\eeq
if the common value of these functions is denoted by $p_i:\R^m\to\R$, then
\beq \label{eq: pluri 3}
\D_{t_k} p_i = \frac{\partial L_k}{\partial x_i} \qquad \forall i=1,\ldots, N, \quad \forall\, k=1,\dots, m.
\eeq
\end{theorem}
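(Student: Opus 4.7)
The plan is to prove both implications by varying $x$ and analyzing the first variation $\delta S_\Gamma$ for appropriately chosen curves $\Gamma$. For the ``if'' direction, parametrizing $\Gamma$ by $\tau$, one has
\[
\delta S_\Gamma = \int_0^1 \sum_{k,i}\left(\frac{\partial L_k}{\partial x_i}\delta x_i + \sum_\ell \frac{\partial L_k}{\partial (x_i)_{t_\ell}}(\delta x_i)_{t_\ell}\right)\dot\gamma_k\,d\tau.
\]
Equation \eqref{eq: pluri 1} kills the terms with $\ell \neq k$; equation \eqref{eq: pluri 2} replaces each $\partial L_k/\partial (x_i)_{t_k}$ by the common $p_i$, so the second block collapses to $p_i\, d(\delta x_i)/d\tau$; and equation \eqref{eq: pluri 3} together with $\D_\tau p_i = \sum_k \dot\gamma_k \D_{t_k} p_i$ identifies the first block as $(dp_i/d\tau)\delta x_i$. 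The integrand is thus a total $\tau$-derivative of $\sum_i p_i\delta x_i$, whose integral vanishes by the boundary hypothesis on $\delta x$.

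For the ``only if'' direction, I exploit the freedom in the choice of $\Gamma$. For equation \eqref{eq: pluri 1}, fix $P \in \mathbb{R}^m$, an index $i$, and a pair $k \neq \ell$; choose $\Gamma$ passing through $P$ tangent to $e_k$ and lying in the hyperplane $\{t_\ell = t_\ell(P)\}$ near $P$. Consider the variation $\delta x_i(t) = (t_\ell - t_\ell(P))\phi(t)$ with $\phi$ a bump function at $P$, and $\delta x_j \equiv 0$ for $j \neq i$. One checks that $\delta x$ vanishes on $\Gamma$ near $P$ (killing all $(\partial L_{k'}/\partial x_j)\delta x_j$ terms), that $(\delta x_i)_{t_r}\vert_\Gamma = \delta_{r\ell}\phi$ (so only the $\ell$-transverse derivative survives), and that $\dot\gamma_{k'}=\delta_{k'k}$ along $\Gamma$ (isolating $L_k$). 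Thus $\delta S_\Gamma$ reduces to $\int \bigl(\partial L_k/\partial (x_i)_{t_\ell}\bigr)\phi\,d\tau$, whose vanishing for arbitrary $\phi$ yields \eqref{eq: pluri 1} at $P$.

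For equation \eqref{eq: pluri 3}, once \eqref{eq: pluri 1} holds each $L_k$ depends through the first jet only on $x_{t_k}$, and restricting the action to a curve tangent to $e_k$ reduces the problem to the classical Euler-Lagrange calculation for $\int L_k(x,x_{t_k})\dd t_k$. For equation \eqref{eq: pluri 2}, apply the variational principle to a piecewise smooth ``corner'' curve that switches from $e_k$ to $e_\ell$ at $P$ (equivalently, compare two smooth curves with common endpoints enclosing a small rectangle at $P$ in the $(t_k,t_\ell)$-plane); the bulk contributions vanish by the previous step, leaving a corner term $\bigl(\partial L_\ell/\partial (x_i)_{t_\ell} - \partial L_k/\partial (x_i)_{t_k}\bigr)(P)\delta x_i(P)$ whose vanishing for arbitrary $\delta x_i(P)$ gives \eqref{eq: pluri 2}. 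The main obstacle I expect is the first step: the bump-function construction must be engineered so that, simultaneously, $\delta x$ vanishes on $\Gamma$, exactly one transverse jet coordinate is activated at $P$, and the tangent $\dot\gamma$ picks out a single index $k$. Once \eqref{eq: pluri 1} is in hand, the derivations of \eqref{eq: pluri 3} and \eqref{eq: pluri 2} are routine applications of variational calculus.
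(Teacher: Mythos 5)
The paper itself gives no proof of this theorem --- it is recalled from \cite{S1} --- so there is nothing in-paper to compare against; your argument is precisely the standard derivation used there (first variation along coordinate-direction segments with bump-function variations to isolate each transverse jet coefficient, the classical Euler--Lagrange computation along each segment, and a corner condition matching the momenta), and both directions are correct, the ``if'' direction reducing the integrand to the exact derivative $\D_\tau\big(\sum_i p_i\,\delta x_i\big)$ as you say. Two small points to tighten: equation (\ref{eq: pluri 1}) holds on the solution but is not an identity in the jet variables, so $L_k$ does not literally become independent of $x_{t_\ell}$ --- you only need that the coefficient of $(\delta x_i)_{t_\ell}$ in the first variation vanishes along $\Gamma$; and since the definition admits only smooth curves, the corner curve must be obtained as a limit of smoothed corners (or avoided altogether by taking a smooth $\Gamma$ with tangent $e_k+e_\ell$ at $P$ and a variation vanishing on $\Gamma$ with $(\delta x_i)_{t_k}=-(\delta x_i)_{t_\ell}=\phi$ there, which yields (\ref{eq: pluri 2}) directly).
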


 \begin{theorem} \label{Th almost closed} {\bf (Almost closedness of the pluri-Lagrangian form)}
On solutions of multi-time Euler-Lagrange equations (\ref{eq: pluri 1})-(\ref{eq: pluri 3}), we have:
$$
\D_{t_k}L_\ell-\D_{t_\ell}L_k=c_{k\ell}={\rm const}.
$$
In particular, if all $c_{k\ell}=0$,  then the 1-form (\ref{sgfg})  is closed on solutions of multi-time Euler-Lagrange equations,
so that the action functional $S_\Gamma$ does not depend on the choice of the curve
$\Gamma$ connecting two given points in $\R^m$.
\end{theorem}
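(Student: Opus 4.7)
My plan is to reduce $M_{k\ell}:=\D_{t_k} L_\ell - \D_{t_\ell} L_k$ on solutions to an explicit expression, recognize it as a $t_r$-derivative of a Hamiltonian-type function, and then exploit commutativity of mixed partials together with antisymmetry to force $\D_{t_r}M_{k\ell}=0$.

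\emph{Reduction and Hamiltonian reformulation.} I would first expand $\D_{t_k} L_\ell$ by the chain rule, using \eqref{eq: pluri 1}--\eqref{eq: pluri 2} to kill the coefficients $\partial L_\ell/\partial (x_i)_{t_j}$ for $j\neq\ell$ and to identify them with $p_i$ for $j=\ell$. This yields $\D_{t_k} L_\ell = \sum_i (\partial L_\ell/\partial x_i)(x_i)_{t_k} + \sum_i p_i (x_i)_{t_\ell t_k}$. Subtracting the $k\leftrightarrow\ell$ analogue, the $p_i$-terms cancel via $(x_i)_{t_\ell t_k}=(x_i)_{t_k t_\ell}$, leaving $M_{k\ell}=\sum_i\!\bigl((\partial L_\ell/\partial x_i)(x_i)_{t_k}-(\partial L_k/\partial x_i)(x_i)_{t_\ell}\bigr)$. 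Next I set $A_k:=\sum_i p_i(x_i)_{t_k}$ and $H_k:=A_k-L_k$. A direct computation of $\D_{t_r} A_k$, invoking \eqref{eq: pluri 3} in the form $\D_{t_r} p_i = \partial L_r/\partial x_i$, reproduces the expansion of $\D_{t_k} L_r$ term-by-term, giving
$$\D_{t_r} A_k = \D_{t_k} L_r\qquad\text{and hence}\qquad \D_{t_r} H_k = \D_{t_k} L_r - \D_{t_r} L_k = M_{kr}.$$

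\emph{Constancy via commuting mixed partials.} On a solution, $H_k$ is a smooth function of $(t_1,\dots,t_m)$, so $\D_{t_\ell}\D_{t_r}H_k = \D_{t_r}\D_{t_\ell}H_k$. Combined with $\D_{t_c}H_b=M_{bc}$, this produces the master identity
$$\D_{t_a} M_{bc} = \D_{t_c} M_{ba}\qquad\text{for all } a,b,c.$$
Specializing $a=b$ gives $\D_{t_k} M_{k\ell}=\D_{t_\ell} M_{kk}=0$ and, by the same token applied with roles swapped, $\D_{t_\ell} M_{k\ell}=0$. For any third $r$, alternating applications of the identity and the antisymmetry $M_{bc}=-M_{cb}$ produce the chain
$$\D_{t_r}M_{k\ell}=\D_{t_\ell}M_{kr}=-\D_{t_\ell}M_{rk}=-\D_{t_k}M_{r\ell}=\D_{t_k}M_{\ell r}=\D_{t_r}M_{\ell k}=-\D_{t_r}M_{k\ell},$$
forcing $\D_{t_r}M_{k\ell}=0$ in every direction. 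Thus $M_{k\ell}$ is constant along each solution, equal to some $c_{k\ell}$, and $\mathcal L$ is closed when all $c_{k\ell}$ vanish, whence $S_\Gamma$ depends only on the endpoints.

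The hard part will be the non-obvious identity $\D_{t_r} A_k = \D_{t_k} L_r$: it hinges on all three multi-time EL equations, and in particular on the fact that \eqref{eq: pluri 2} singles out the \emph{same} covector $p_i$ in every direction, which is what allows the $t_r$-derivative of $A_k$ to exactly reproduce the $t_k$-derivative of $L_r$. Once this ``Hamiltonian-style'' reformulation $\D_{t_r}H_k=M_{kr}$ is secured, the closedness statement follows by the purely algebraic manipulation sketched above.
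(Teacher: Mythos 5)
Your argument is correct, and it is worth noting that the paper itself does not prove Theorem \ref{Th almost closed}: it is recalled from \cite{S1}, and the closest the paper comes to a proof is the explicit residue computation of Proposition \ref{Th: DkLl-DlLk}, which is specific to the 1-form built from variational symmetries rather than to a general pluri-Lagrangian 1-form (\ref{sgfg}). Your route is self-contained and genuinely different in organization: the reduction via (\ref{eq: pluri 1})--(\ref{eq: pluri 2}) to
$$
\D_{t_k}L_\ell=\sum_i\frac{\pa L_\ell}{\pa x_i}(x_i)_{t_k}+\sum_i p_i (x_i)_{t_\ell t_k}
$$
is sound (the coefficients $\pa L_\ell/\pa (x_i)_{t_j}$, $j\neq\ell$, vanish identically along a solution, so they drop from the chain rule), and the key identity $\D_{t_r}\big(\sum_i p_i(x_i)_{t_k}\big)=\D_{t_k}L_r$ follows from (\ref{eq: pluri 3}) and symmetry of mixed partials exactly as you say. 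Introducing $H_k=\sum_i p_i(x_i)_{t_k}-L_k$ so that $\D_{t_r}H_k=M_{kr}$, and then extracting $\D_{t_r}M_{k\ell}=0$ from commutativity of mixed partials of $H_k$ together with the antisymmetry of $M$, is a clean purely combinatorial finish; the chain of six equalities checks out and the cases $r\in\{k,\ell\}$ are covered by the $a=b$ specialization. What your approach buys is generality and transparency: it works for any 1-form of type (\ref{sgfg}) and makes no use of fluxes, Noether integrals, or the $r_{ij}$ machinery; moreover your auxiliary functions $H_k$ are precisely the Hamiltonians of the phase-space picture in Section \ref{Sect phase space}, so your proof anticipates the paper's own observation that the Hamiltonian formulation is the most natural explanation of almost closedness. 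What the paper's Section \ref{Sect almost closed} buys instead is more refined information in its special case, namely the exact polynomial-in-$\EE_i$ form of the defect, showing closedness already holds when only $m$ of the $m+1$ equation systems are imposed. One small presentational point: your argument shows $M_{k\ell}$ is constant along each solution (the constant may a priori depend on the solution), which is the intended meaning of the theorem; you might state this explicitly.
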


\section{From variational symmetries to a pluri-Lagrangian problem}
\label{Sect pluri config space}

It is easy to check that for the 1-form (\ref{eq: wrong form}) with $m+1$ times $t,t_1,\ldots, t_m$ the pluri-Lagrangian problem is inconsistent (for instance, equation (\ref{eq: pluri 2}) is not satisfied, since $\partial F_k/\partial (x_i)_{t_k} =0\neq \partial L/\partial \dot x_i$). However, it turns out to be possible to modify this 1-form to 
\beq \label{eq: right form}
\mathcal L=L(x,\dot x)\dd t+L_1(x,\dot x,x_{t_1}) \dd t_1+\ldots +L_m(x,\dot x,x_{t_m}) \dd t_m,
\eeq
for which the pluri-Lagrangian problem is  well-posed, with multi-time Euler-Lagrange equations being exactly the desired ones. Moreover, on solutions of these equations, forms (\ref{eq: wrong form}) and (\ref{eq: right form}) coincide. To achieve this, the functions $L_k(x,\dot x,x_{t_k})$ are obtained by replacing in formula (\ref{eq: flux}) for the flux $F_k$ the quantities $V_i^{(k)}(x,\dot x)$  by $(x_i)_{t_k}$. 
\begin{theorem} \label{th: EL all}
Let $v_1,\ldots, v_m$ be commuting variational symmetries for the Lagrange function $L(x,\dot x)$, with the fluxes $F_1(x,\dot x), \ldots, F_m(x,\dot x)$, and with the Noether integrals $J_1(x,\dot x), \ldots,$ $J_m(x,\dot x)$. Define for $k=1,\ldots, m$:
\begin{eqnarray}    \label{eq: L k int}
L_k(x,\dot x,x_{t_k}) & = & \sum_{i=1}^N \frac{\partial L(x,\dot x)}{\partial \dot x_i} (x_i)_{t_k} -J_k(x,\dot x)\\
    \label{eq: L k flux}
                             & = & \sum_{i=1}^N \frac{\partial L(x,\dot x)}{\partial \dot x_i}\big((x_i)_{t_k}-V_i^{(k)}(x,\dot x)\big)+F_k(x,\dot x),
\end{eqnarray}
so that $L_k(x,\dot x,x_{t_k})$ coincides with $F_k(x,\dot x)$ on solutions of differential equations (\ref{eq: new flows}). 
Then for the 1-form (\ref{eq: right form}), the multi-time Euler-Lagrange equations are equivalent to a consistent system of the standard Euler-Lagrange equations $\EE_i=0$ coupled with the evolution equations (\ref{eq: new flows}).
\end{theorem}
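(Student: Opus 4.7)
The plan is to verify, one at a time, the three families of multi-time Euler-Lagrange equations from Theorem \ref{Th multi-time EL eqs}, applied to the 1-form (\ref{eq: right form}) with $m+1$ independent times $t,t_1,\ldots,t_m$; throughout I treat $t$ as the zeroth coordinate, so that $L_0=L(x,\dot x)$. First I would handle (\ref{eq: pluri 1}): the only non-trivial instances are $\partial L_k/\partial\dot x_i$ for $k\ge 1$, since $L_k$ depends on $(x_i)_{t_\ell}$ only when $\ell\in\{0,k\}$. Starting from (\ref{eq: L k flux}) and using Lemma \ref{lemma F} to cancel the contributions of $\partial F_k/\partial\dot x_i$ against those coming from $\partial V_j^{(k)}/\partial\dot x_i$, the derivative collapses to
$$
\frac{\partial L_k}{\partial\dot x_i} \;=\; \sum_{j=1}^N \frac{\partial^2 L}{\partial\dot x_i\,\partial\dot x_j}\bigl((x_j)_{t_k}-V_j^{(k)}(x,\dot x)\bigr),
$$
and the non-degeneracy hypothesis (\ref{eq: nondeg}) makes its vanishing equivalent to the evolution equations (\ref{eq: new flows}). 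Equation (\ref{eq: pluri 2}) then follows by direct inspection of (\ref{eq: L k int}): $\partial L_k/\partial(x_i)_{t_k}=\partial L/\partial\dot x_i$ for every $k\ge 1$, matching $\partial L_0/\partial\dot x_i$, so the common momentum is $p_i=\partial L/\partial\dot x_i$.

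Equation (\ref{eq: pluri 3}) with $k=0$ is literally the standard Euler-Lagrange equation $\EE_i=0$. The main content of the theorem lies in the remaining case $k\ge 1$: one must check that, on solutions of the standard Euler-Lagrange equations coupled with the flows (\ref{eq: new flows}), the identity $\D_{t_k} p_i=\partial L_k/\partial x_i$ holds. My plan here is to interpret this through the Hamiltonian picture of Section \ref{Sect Ham}. Restricting (\ref{eq: L k flux}) to solutions annihilates its first summand and leaves
$$
\frac{\partial L_k}{\partial x_i}\bigg|_{\mathrm{sol}} \;=\; -\sum_{j=1}^N p_j\,\frac{\partial V_j^{(k)}}{\partial x_i}+\frac{\partial F_k}{\partial x_i},
$$
which, by (\ref{eq: LQ1 to H1 x}), is precisely $-\partial H_k(x,p)/\partial x_i$ evaluated at $p=p(x,\dot x)$. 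On the other hand, the left-hand side reads $\D_{t_k}(\partial L/\partial\dot x_i)$, expanded via $(x_j)_{t_k}=V_j^{(k)}$ and $(\dot x_j)_{t_k}=\D_t V_j^{(k)}$; matching it to $-\partial H_k/\partial x_i$ is achieved by differentiating the variational-symmetry identity $\D_{v_k}L-\D_t F_k=0$ in $\dot x_i$ and using the Euler-Lagrange equations to eliminate the $\ddot x$-terms. Conceptually, this is nothing but the Legendre-dual statement that the lift of the flow (\ref{eq: new flows}) to $T^*\mathcal X$ coincides with the Hamiltonian flow of $H_k$.

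The step I expect to be the main obstacle is this final matching calculation, because it requires a careful expansion and cancellation of several terms arising from $\D_t V_j^{(k)}$; however, no genuinely new ingredient enters beyond the variational-symmetry identity and Lemma \ref{lemma F}. Once this last equation is established, all three families of multi-time Euler-Lagrange equations have been related, in both directions, to the standard Euler-Lagrange equations and the evolution equations (\ref{eq: new flows}); the consistency of the latter coupled system is in turn supplied by the commutativity established in Section \ref{Sect comm var sym}, which completes the proof.
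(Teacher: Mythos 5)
Your proposal is correct and follows essentially the same route as the paper: the same reduction of (\ref{eq: pluri 1}) via Lemma \ref{lemma F} and non-degeneracy, the same observation for (\ref{eq: pluri 2}), and for (\ref{eq: pluri 3}) the same key computation (differentiating $\D_{v_k}L-\D_t F_k=0$ in $\dot x_i$ and eliminating the $\ddot x$-terms with the Euler--Lagrange equations), which is exactly the content of the paper's Proposition \ref{lemma var eq 1}. The only cosmetic difference is that you phrase the target expression for $\partial L_k/\partial x_i$ in the Hamiltonian language of (\ref{eq: LQ1 to H1 x}), whereas the paper stays purely Lagrangian.
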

\benpf
The multi-time Euler-Lagrange equations for the 1-form $\mathcal L$ consist of:
\begin{itemize}
\item
Equations (\ref{eq: pluri 1}), namely
\beq \label{eq: dLiddxk}
\frac{\partial L}{\partial (x_i)_{t_k}}=0, \quad \frac{\partial L_k}{\partial \dot x_i}=0, \quad \frac{\partial L_k}{\partial (x_i)_{t_\ell}}=0,\quad
i=1,\ldots, N, \quad k\neq \ell\in\{1,\ldots,m\}.
\eeq
The first and the third ones are trivially satisfied, while the second is equivalent to (\ref{eq: new flows}). Indeed, from definition (\ref{eq: L k flux}) of $L_k$, we compute the second equation in (\ref{eq: dLiddxk}) to be :
\[
\frac{\partial L_k}{\partial \dot x_i}  =  \sum_{j=1}^N \frac{\partial^2 L}{\partial \dot x_i\partial \dot x_j} \Big((x_j)_{t_k}-V_j^{(k)}\Big)
-\sum_{j=1}^N \frac{\partial L}{\partial \dot x_j}\ \frac{\partial V_j^{(k)}}{\partial \dot x_i}+\frac{\partial F_k}{\partial \dot x_i}=0,
\]
with $i=1,\ldots, N$. 
\noindent
According to Lemma \ref{lemma F}, this results in
\[
\frac{\partial L_k}{\partial \dot x_i}  = \sum_{j=1}^N \frac{\partial^2 L}{\partial \dot x_i\partial \dot x_j} \Big((x_j)_{t_k}-V_j^{(k)}\Big)=0,\qquad i=1,\ldots, N.
\]
Due to the non-degeneracy condition (\ref{eq: nondeg}), this is equivalent to equations (\ref{eq: new flows}). 
\item
Equations (\ref{eq: pluri 2}) read:
\[
\frac{\partial L}{\partial \dot x_i}=\frac{\partial L_1}{\partial (x_i)_{t_1}}=\ldots=\frac{\partial L_m}{\partial (x_i)_{t_m}}, \qquad  i=1,\ldots, N,
\]
and are automatically satisfied by construction. 
\item
Equations (\ref{eq: pluri 3}) consist of the standard Euler-Lagrange equations $\EE_i=0$ given in (\ref{eq: EL 1}), and of the Euler-Lagrange equations associated with $L_k$:
\beq \label{eq: EL k}
\EE^{(k)}_i = \frac{\pa L_k(x, \dot x,x_{t_k})}{\pa x_i}- \D_{t_k} p_i(x, \dot x)=0,
\eeq
with $i=1,\dots,N, k=1,\ldots,m$.
The latter will be shown to be differential consequences of $\EE_i=0$ and equations (\ref{eq: new flows}), see Proposition \ref{lemma var eq 1} below.
\end{itemize}
This finishes the proof of Theorem \ref{th: EL all}.
\endpf

{\bf Remark.} It is a highly nontrivial feature of the pluri-Lagrangian theory that the multi-time Euler-Lagrange equations include the  evolutionary first order differential equations (\ref{eq: new flows}). 

\begin{proposition}\label{lemma var eq 1}
For $L_k(x,\dot x,x_{t_k})$ defined as in (\ref{eq: L k flux}), Euler-Lagrange equations (\ref{eq: EL k}) are differential consequences of $\EE_i=0$ and equations (\ref{eq: new flows}). 
\end{proposition}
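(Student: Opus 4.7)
The plan is to compute $\EE^{(k)}_i$ directly using representation (\ref{eq: L k flux}) for $L_k$, and to express the result as a linear combination of the standard Euler-Lagrange expressions $\EE_j$ modulo the flow equations (\ref{eq: new flows}).

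First, I differentiate (\ref{eq: L k flux}) with respect to $x_i$, treating $x$, $\dot x$ and $x_{t_k}$ as independent coordinates:
\begin{equation*}
\frac{\partial L_k}{\partial x_i} = \sum_j \frac{\partial p_j}{\partial x_i}\bigl((x_j)_{t_k} - V_j^{(k)}\bigr) - \sum_j p_j \frac{\partial V_j^{(k)}}{\partial x_i} + \frac{\partial F_k}{\partial x_i}.
\end{equation*}
On solutions of (\ref{eq: new flows}) the first sum vanishes, and identity (\ref{eq: LQ1 to H1 x}) from Theorem \ref{Th: Ham converse Noether} identifies the remainder with $-\partial H_k/\partial x_i$.

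Next, since $p_i = p_i(x,\dot x)$, on solutions of (\ref{eq: new flows}) one has $(x_j)_{t_k} = V_j^{(k)}$ and, by commutativity of the cross partials $\partial_t \partial_{t_k}$, also $(\dot x_j)_{t_k} = \D_t V_j^{(k)}$. The chain rule then reduces $\D_{t_k} p_i$ to $\D_{v_k} p_i$ in the sense of the prolongation operator (\ref{dgwrhg}).

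The crux of the argument, and the likely main obstacle, is to establish the polynomial identity
\begin{equation*}
\D_{v_k} p_i + \frac{\partial H_k}{\partial x_i} = -\sum_j \frac{\partial V_j^{(k)}}{\partial \dot x_i}\, \EE_j,
\end{equation*}
which I would obtain by differentiating the variational symmetry identity (\ref{eq: var sym}) with respect to $\dot x_i$ as an identity in $(x,\dot x,\ddot x)$, substituting $\D_t p_j = \partial L/\partial x_j - \EE_j$, and invoking Lemma \ref{lemma F} to cancel the terms coming from $\partial F_k/\partial \dot x_i$ against their counterparts arising from $\partial(\D_t V_j^{(k)})/\partial \dot x_i$ in the expansion of $\D_{v_k} L$. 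Combining the three steps yields
\begin{equation*}
\EE^{(k)}_i = \frac{\partial L_k}{\partial x_i} - \D_{t_k} p_i = \sum_j \frac{\partial V_j^{(k)}}{\partial \dot x_i}\, \EE_j
\end{equation*}
modulo the flow equations (\ref{eq: new flows}), and this expression vanishes on the common solutions of $\EE_j=0$ and (\ref{eq: new flows}), as required.
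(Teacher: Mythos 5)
Your argument is correct and follows essentially the same route as the paper's proof: restrict both sides of $\EE_i^{(k)}=0$ to solutions of (\ref{eq: new flows}), then differentiate the variational symmetry identity $\D_{v_k}L=\D_t F_k$ with respect to $\dot x_i$, substitute $\D_t p_j=\partial L/\partial x_j-\EE_j$, and let Lemma \ref{lemma F} produce the cancellations; your ``crux'' identity $\D_{v_k}p_i+\partial H_k/\partial x_i=-\sum_j(\partial V_j^{(k)}/\partial\dot x_i)\,\EE_j$ does check out and is exactly the content of the paper's chain of integrations by parts. The only cosmetic difference is that you package the $x_i$-derivative of $L_k$ via the Hamiltonian identity (\ref{eq: LQ1 to H1 x}), whereas the paper stays entirely in the Lagrangian variables.
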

\benpf 
Both parts of equation $\EE_i^{(k)}=0$, that is, of equation $\D_{t_k}p_i=\partial L_k/\partial x_i$, written in length, read:
\beq \label{eq: dpidtk}
\D_{t_k}p_i=\sum_{j=1}^N\frac{\partial^2 L}{\partial \dot x_i\partial x_j} (x_j)_{t_k}+\sum_{j=1}^N\frac{\partial^2 L}{\partial \dot x_i\partial \dot x_j} (\dot x_j)_{t_k}\ ,
\eeq
and
\beq \label{eq: dLkdxi}
\frac{\partial L_k}{\partial x_i}=\sum_{j=1}^N\frac{\partial^2 L}{\partial x_i\partial \dot x_j}\big((x_j)_{t_k}-V_j^{(k)}\big)-\sum_{j=1}^N \frac{\partial L}{\partial \dot x_j}\ \frac{\partial V_j^{(k)}}{\partial x_i}+\frac{\partial F_k}{\partial x_i}\ .
\eeq
On solutions of $(x_j)_{t_k}=V_j^{(k)}$ they reduce respectively to 
\beq \label{eq: dpidtk restr}
\D_{t_k}p_i=\sum_{j=1}^N\frac{\partial^2 L}{\partial \dot x_i\partial x_j} V_i^{(k)}+\sum_{j=1}^N\frac{\partial^2 L}{\partial \dot x_i\partial \dot x_j} \big(\D_t V_j^{(k)}\big),
\eeq
and 
\beq \label{eq: dLkdxi restr}
\frac{\partial L_k}{\partial x_i}=-\sum_{j=1}^N \frac{\partial L}{\partial \dot x_j}\ \frac{\partial V_j^{(k)}}{\partial x_i}+\frac{\partial F_k}{\partial x_i}\ .
\eeq
We transform expression (\ref{eq: dpidtk restr})  as follows. First, we integrate by parts and use equation (\ref{eq: var symmetries}) for variational symmetries:
\begin{eqnarray*}
\D_{t_k}p_i & = & \frac{\partial}{\partial \dot x_i}\left(\sum_{j=1}^N\frac{\partial L}{\partial x_j} V_j^{(k)}
+\sum_{j=1}^N\frac{\partial L}{\partial \dot x_j} \big(\D_t V_j^{(k)}\big)\right)\nn \\
&&
-\sum_{j=1}^N\frac{\partial L}{\partial x_j} \ \frac{\partial V_j^{(k)}}{\partial \dot x_i}
-\sum_{j=1}^N\frac{\partial L}{\partial \dot x_j} \ \frac{\partial}{\partial \dot x_i}\big(\D_t V_j^{(k)}\big)\\
& = & \frac{\partial}{\partial \dot x_i}(\D_tF_k)-\sum_{j=1}^N\frac{\partial L}{\partial x_j}\ \frac{\partial V_j^{(k)}}{\partial \dot x_i}
 -\sum_{j=1}^N\frac{\partial L}{\partial \dot x_j} \ \frac{\partial}{\partial \dot x_i}\big(\D_t V_j^{(k)}\big) .
\end{eqnarray*}
In the first sum of the last line, we make use of Euler-Lagrange equations (\ref{eq: EL 1}), to get:
\[
\D_{t_k}p_i=\frac{\partial}{\partial \dot x_i}(\D_tF_k)-\sum_{j=1}^N\EE_i\ \frac{\partial V_j^{(k)}}{\partial \dot x_i}
-\sum_{j=1}^N \D_t\left(\frac{\partial L}{\partial \dot x_j}\right)\ \frac{\partial V_j^{(k)}}{\partial \dot x_i}
-\sum_{j=1}^N\frac{\partial L}{\partial \dot x_j} \ \frac{\partial}{\partial \dot x_i}\big(\D_t V_j^{(k)}\big) .
\] 
Integrating by parts once again, we get:
\begin{eqnarray} \label{eq: lemma 2 aux 2}
\D_{t_k}p_i & = & \frac{\partial}{\partial \dot x_i}(\D_tF_k)-\sum_{j=1}^N\EE_i\ \frac{\partial V_j^{(k)}}{\partial \dot x_i}
-\D_t\! \left(\sum_{j=1}^N\frac{\partial L}{\partial \dot x_j}\ \frac{\partial V_j^{(k)}}{\partial \dot x_i}\right) \nonumber \\
  & & + \sum_{j=1}^N\frac{\partial L}{\partial \dot x_j}\left( \D_t\bigg(\frac{\partial V_j^{(k)}}{\partial \dot x_i}\bigg)-
  \frac{\partial}{\partial \dot x_i}\big(\D_t V_j^{(k)}\big)\right).
\end{eqnarray}
But for any function $f(x,\dot x)$, we have the following identity:
\begin{eqnarray}
\lefteqn{\D_t\left(\frac{\partial f}{\partial \dot x_i}\right)-\frac{\partial }{\partial \dot x_i}(\D_t f)} \nonumber \\
 & = & 
\sum_{j=1}^N\frac{\partial^2 f}{\partial \dot x_i \partial x_j} \dot x_j+\sum_{j=1}^N\frac{\partial^2 f}{\partial \dot x_i \partial \dot x_j} \ddot x_j
-\frac{\partial}{\partial \dot x_i}\left(\sum_{j=1}^N\frac{\partial f}{\partial x_j} \dot x_j+\sum_{j=1}^N\frac{\partial f}{\partial \dot x_j} \ddot x_j\right)\nonumber \\
 & = & -\frac{\partial f}{\partial x_i}. \nn
\end{eqnarray}
Using this identity twice (for $F_k$ and for $V_j^{(k)}$), we put (\ref{eq: lemma 2 aux 2}) in the form
\[
\D_{t_k}p_i =-\sum_{j=1}^N\EE_i\ \frac{\partial V_j^{(k)}}{\partial \dot x_i}+\D_t
\left(\frac{\partial F_k}{\partial \dot x_i}-\sum_{j=1}^N\frac{\partial L}{\partial \dot x_j}\ \frac{\partial V_j^{(k)}}{\partial \dot x_i}\right) 
+\frac{\partial F_k}{\partial x_i} - \sum_{j=1}^N\frac{\partial L}{\partial \dot x_j}\ \frac{\partial V_j^{(k)}}{\partial x_i}.
\]
According to Lemma \ref{lemma F}, this equals to
\[
\D_{t_k}p_i =-\sum_{j=1}^N\EE_i\ \frac{\partial V_j^{(k)}}{\partial \dot x_i}
+\frac{\partial F_k}{\partial x_i} - \sum_{j=1}^N\frac{\partial L}{\partial \dot x_j}\ \frac{\partial V_j^{(k)}}{\partial x_i},
\]
which on solutions of $\EE_i=0$ coincides with the right-hand side of (\ref{eq: dLkdxi restr}). 
\endpf

\section{Almost closedness of the pluri-Lagrangian 1-form}
\label{Sect almost closed}

As a corollary of Theorem \ref{Th almost closed}, we can conclude that 
\begin{align}
 \D_{t_k} L(x,\dot x)- \D_{t} L_k(x,\dot x,x_{t_k}) & =0, \label{eq: DkL-DtLk}\\ 
 \D_{t_k} L_\ell(x,\dot x,x_{t_\ell})- \D_{t_\ell} L_k(x,\dot x,x_{t_k}) & =c_{k\ell},  \label{eq: DkLl-DlLk}
\end{align} 
on solutions of the Euler-Lagrange equations $\EE_i=0$ and of the evolution equations of variational symmetries $(x_i)_{t_k}=V^{(k)}_i$. However, a more detailed information is available.

\begin{proposition} \label{Th: DkLl-DlLk}
The following identities hold:
\begin{eqnarray} 
\D_{t_k} L(x,\dot x)-\D_{t} L_k(x,\dot x,x_{t_k}) & = & \sum_{i=1}^N\big((x_i)_{t_k}-V_i^{(k)} \big) \EE_i\ , 
\label{eq: DkL-DtLk full} \\
\D_{t_k} L_\ell(x,\dot x,x_{t_\ell})- \D_{t_\ell} L_k(x,\dot x,x_{t_k}) & = & c_{k\ell}+
\sum_{i=1}^N \left(\big((x_i)_{t_\ell}-V_i^{(\ell)}\big)\EE_i^{(k)}-\big((x_i)_{t_k}-V_i^{(k)}\big)\EE_i^{(\ell)} \right)\nonumber \\
\lefteqn{\!\!\!\!\!\!\!\!\!\!\!\!\!\!\!\!+\sum_{i=1}^N\sum_{j=1}^N\bigg(\frac{\pa^2 L}{\pa x_i\pa \dot x_j}-\frac{\pa^2 L}{\pa x_j\pa \dot x_i}\bigg)
\big((x_i)_{t_\ell}-V_i^{(\ell)}\big)\big((x_j)_{t_k}-V_j^{(k)}\big),} \label{eq: DkLl-DlLk full}
\end{eqnarray}
where $c_{k\ell}$ are constants analogous to those from Proposition \ref{prop comm symm}. In particular, equations (\ref{eq: DkL-DtLk}), (\ref{eq: DkLl-DlLk}) are satisfied as soon as any $m$ out of $m+1$ systems $\EE_i=0$ and $(x_i)_{t_k}=V^{(k)}_i$ $(k=1,\ldots,m)$ are satisfied.
\end{proposition}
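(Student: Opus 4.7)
\emph{Plan.} Let $p_i=\partial L/\partial\dot x_i$ and $Q_i^{(k)}=(x_i)_{t_k}-V_i^{(k)}$, so that formula (\ref{eq: L k flux}) reads compactly $L_k=\sum_i p_i Q_i^{(k)}+F_k$. The main book-keeping device is the decomposition
\[
\D_{t_k}f=\D_{v_k}f+\sum_i Q_i^{(k)}\tfrac{\partial f}{\partial x_i}+\sum_i(\D_t Q_i^{(k)})\tfrac{\partial f}{\partial \dot x_i}\qquad(\forall\,f=f(x,\dot x)),
\]
obtained from $(x_i)_{t_k}=V_i^{(k)}+Q_i^{(k)}$ and $(\dot x_i)_{t_k}=\D_t V_i^{(k)}+\D_t Q_i^{(k)}$, together with the commutativity $\D_t\D_{t_k}=\D_{t_k}\D_t$, which yields $\D_{t_\ell}Q_i^{(k)}-\D_{t_k}Q_i^{(\ell)}=\D_{t_k}V_i^{(\ell)}-\D_{t_\ell}V_i^{(k)}$.

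\emph{Identity (\ref{eq: DkL-DtLk full}).} I would expand $\D_{t_k}L=\sum_i(\partial L/\partial x_i)(x_i)_{t_k}+\sum_i p_i(\dot x_i)_{t_k}$ and $\D_t L_k=\sum_i(\D_t p_i)Q_i^{(k)}+\sum_i p_i\bigl((\dot x_i)_{t_k}-\D_t V_i^{(k)}\bigr)+\D_t F_k$, then substitute $\D_t F_k=\D_{v_k}L=\sum_i V_i^{(k)}(\partial L/\partial x_i)+\sum_i(\D_t V_i^{(k)})p_i$ (the variational-symmetry condition~(\ref{eq: var sym})). The two $(\D_t V_i^{(k)})p_i$ contributions cancel, and so do the $p_i(\dot x_i)_{t_k}$ terms; what remains collapses into $\sum_i Q_i^{(k)}\bigl(\partial L/\partial x_i-\D_t p_i\bigr)=\sum_i Q_i^{(k)}\EE_i$, which is (\ref{eq: DkL-DtLk full}).

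\emph{Identity (\ref{eq: DkLl-DlLk full}).} Expand $\D_{t_k}L_\ell-\D_{t_\ell}L_k$ into three groups: $(A)\ \sum_i[(\D_{t_k}p_i)Q_i^{(\ell)}-(\D_{t_\ell}p_i)Q_i^{(k)}]$, $(B)\ \sum_i p_i\bigl[\D_{t_k}Q_i^{(\ell)}-\D_{t_\ell}Q_i^{(k)}\bigr]$, and $(C)\ \D_{t_k}F_\ell-\D_{t_\ell}F_k$. In $(B)$ the $\D_t$-commutativity rewrites the bracket as $\D_{t_\ell}V_i^{(k)}-\D_{t_k}V_i^{(\ell)}$; in $(C)$ the decomposition of $\D_{t_k}f$ above converts it to $\D_{v_k}F_\ell-\D_{v_\ell}F_k$ plus $Q$-corrections, and Proposition~\ref{prop comm symm} rewrites the main part as $c_{k\ell}+\sum_i p_i(\D_{v_k}V_i^{(\ell)}-\D_{v_\ell}V_i^{(k)})$. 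Combining $(B)$ with the $p_i$-piece of $(C)$ and applying the decomposition once more (now to $f=V_i^{(\ell)}$) leaves $c_{k\ell}$ plus pure $Q$-corrections. The remaining linear-in-$Q$ portion of $(A)+(B)+(C)$ should then, using Lemma~\ref{lemma F} together with the explicit form $\EE_i^{(k)}=\partial L_k/\partial x_i-\D_{t_k}p_i$, regroup exactly into $\sum_i[Q_i^{(\ell)}\EE_i^{(k)}-Q_i^{(k)}\EE_i^{(\ell)}]$; the quadratic-in-$Q$ residue picks up a symmetric Hessian piece $\partial p_j/\partial\dot x_i=\partial^2L/\partial\dot x_i\partial\dot x_j$ which drops out by antisymmetrization in $(i,j)$, and a mixed piece $\partial p_j/\partial x_i=\partial^2L/\partial x_i\partial\dot x_j$ which antisymmetrizes into the claimed skew factor $\partial^2L/\partial x_i\partial\dot x_j-\partial^2L/\partial x_j\partial\dot x_i$.

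\emph{Corollary and main obstacle.} Once the two identities are established, the ``$m$ out of $m+1$'' statement follows by case analysis: if at most one of $\{\EE_i=0,\,Q_i^{(1)}=0,\ldots,Q_i^{(m)}=0\}$ fails, then each of the products $Q_i^{(k)}\EE_i$, $Q_i^{(\ell)}\EE_i^{(k)}$, $Q_i^{(k)}Q_j^{(\ell)}$ appearing on the right-hand sides automatically contains a vanishing factor, using Proposition~\ref{lemma var eq 1} to deduce $\EE_i^{(k)}=0$ from $\EE_i=0$ together with $Q^{(k)}=0$. The main obstacle I anticipate is the bookkeeping inside the linear-in-$Q$ reorganization for identity~(\ref{eq: DkLl-DlLk full}): several $\partial F/\partial\dot x$-terms must be cancelled via Lemma~\ref{lemma F} and the contributions from $(A),(B),(C)$ carefully re-collected before the skew combination $Q^{(\ell)}\EE^{(k)}-Q^{(k)}\EE^{(\ell)}$ together with the antisymmetric Hessian residue emerges cleanly; everything else is a systematic chain-rule expansion.
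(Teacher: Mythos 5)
Your proposal follows essentially the same route as the paper's proof: the first identity via the expansion of $\D_{t_k}L$ and $\D_t L_k$ combined with the variational-symmetry relation $\D_{v_k}L=\D_t F_k$, and the second via exactly the paper's grouping (your $(A)$, $(B)$, $(C)$ correspond to the paper's expansion of $\D_{t_k}L_\ell$), using the cancellation of the mixed derivatives $(x_i)_{t_kt_\ell}$, the conversion of $\D_{t_k}$ into $\D_{v_k}$ plus $Q$-corrections, Proposition \ref{prop comm symm}, Lemma \ref{lemma F}, and the substitution of $\partial L_k/\partial x_i$ from (\ref{eq: dLkdxi}) to produce the antisymmetric Hessian residue. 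The plan is correct, and your case analysis for the ``$m$ out of $m+1$'' corollary (invoking Proposition \ref{lemma var eq 1} to get $\EE_i^{(k)}=0$) is the intended argument.
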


\benpf 
For the proof of equation (\ref{eq: DkL-DtLk full}), we compute:
\[
\D_{t_k} L(x,\dot x)-\D_{v_k}L(x,\dot x)  =  
\sum_{i=1}^N \big((x_i)_{t_k}-V_i^{(k)}\big)\frac{\pa L}{ \pa x_i} 
 +\sum_{i=1}^N \D_t\big((x_i)_{t_k}-V_i^{(k)}\big)\frac{\pa L}{ \pa \dot x_i},
\]
and
\[
\D_{t} L_k(x,\dot x,x_{t_k})  =  \sum_{i=1}^N \D_t\big((x_i)_{t_k}-V_i^{(k)}\big)\frac{\pa L}{ \pa \dot x_i}
 +\sum_{i=1}^N \big((x_i)_{t_k}-V_i^{(k)}\big)\D_t\bigg(\frac{\pa L}{ \pa \dot x_i}\bigg)+\D_{t} F_k.
\]
Taking the difference and using (\ref{eq: EL 1}), we find:
$$
\D_{t_k} L(x,\dot x)-\D_{t} L_k(x,\dot x,x_{t_k})=\D_{v_k}L-\D_t F_k+\sum_{i=1}^N \big((x_i)_{t_k}-V_i^{(k)}\big)\EE_i\ .
$$
Taking into account equation (\ref{eq: var symmetries}), we arrive at (\ref{eq: DkL-DtLk full}).

For the proof of equation (\ref{eq: DkLl-DlLk full}), we compute:
\begin{eqnarray*}
\D_{t_k} L_\ell(x,\dot x,x_{t_\ell}) & = & 
\sum_{i=1}^N \big(\D_{t_k} p_i \big)\big((x_i)_{t_\ell}-V_i^{(\ell)}\big)  +\sum_{i=1}^N p_i (x_i)_{t_kt_\ell} -\sum_{i=1}^N  p_i  \big(\D_{t_k}V_i^{(\ell)}\big)+\D_{t_k}  F_\ell \\
  & = & \sum_{i=1}^N \left(-\EE_i^{(k)}+\frac{\partial L_k}{\partial x_i}\right)\big((x_i)_{t_\ell}-V_i^{(\ell)}\big) \nn \\
  && +\sum_{i=1}^N p_i (x_i)_{t_kt_\ell} -\sum_{i=1}^N  p_i  \big(\D_{t_k}V_i^{(\ell)}\big)+\D_{t_k}  F_\ell. 
  \end{eqnarray*}
Here
\[
\D_{t_k}F_\ell=D_{v_k}F_\ell+\sum_{i=1}^N \big((x_i)_{t_k}-V_i^{(k)}\big)\frac{\partial F_\ell}{\partial x_i}+\sum_{i=1}^N \D_t\big((x_i)_{t_k}-V_i^{(k)}\big)\frac{\partial F_\ell}{\partial \dot x_i}\ .
\]  
Thus, we find:
\begin{eqnarray}  \label{eq: proof prop 2 aux}
\lefteqn{\D_{t_k} L_\ell(x,\dot x,x_{t_\ell}) -\D_\ell L_k(x,\dot x,x_{t_k})  =  \D_{v_k}F_\ell-\D_{v_\ell}F_k}  \nonumber \\
  & & +\sum_{i=1}^N \left(\big((x_i)_{t_k}-V_i^{(k)}\big)\EE_i^{(\ell)}-\big((x_i)_{t_\ell}-V_i^{(\ell)}\big)\EE_i^{(k)}\right)  \nonumber \\ 
  & & +\sum_{i=1}^N \frac{\partial L_k}{\partial x_i}\big((x_i)_{t_\ell}-V_i^{(\ell)}\big) 
         -\sum_{i=1}^N \big((x_i)_{t_\ell}-V_i^{(\ell)}\big)\frac{\partial F_k}{\partial x_i}
         -\sum_{i=1}^N \D_t\big((x_i)_{t_\ell}-V_i^{(\ell)}\big)\frac{\partial F_k}{\partial \dot x_i}   \nonumber \\
  & & -\sum_{i=1}^N \frac{\partial L_\ell}{\partial x_i}\big((x_i)_{t_k}-V_i^{(k)}\big)
        +\sum_{i=1}^N \big((x_i)_{t_k}-V_i^{(k)}\big)\frac{\partial F_\ell}{\partial x_i}
        +\sum_{i=1}^N \D_t\big((x_i)_{t_k}-V_i^{(k)}\big)\frac{\partial F_\ell}{\partial \dot x_i}  \nonumber \\
& &   +\sum_{i=1}^N  p_i  \big(\D_{t_\ell}V_i^{(k)}-\D_{t_k}V_i^{(\ell)}\big).
\end{eqnarray}
Here the last line (with the index $i$ replaced by $j$) is transformed as follows:
\begin{eqnarray*}
\lefteqn{\sum_{j=1}^N p_j\big(\D_{t_\ell}V_j^{(k)}-\D_{t_k}V_j^{(\ell)}\big)\ = \ \sum_{j=1}^N p_j\big(\D_{v_\ell}V_j^{(k)}-\D_{v_k}V_j^{(\ell)}\big)}\\
& & +\sum_{j=1}^N p_j\sum_{i=1}^N \big((x_i)_{t_\ell}-V_i^{(\ell)}\big)\frac{\partial V_j^{(k)}}{\partial x_i}
       +\sum_{j=1}^N p_j\sum_{i=1}^N \D_t\big((x_i)_{t_\ell}-V_i^{(\ell)}\big)\frac{\partial V_j^{(k)}}{\partial \dot x_i}\\
& &  -\sum_{j=1}^N p_j\sum_{i=1}^N \big((x_i)_{t_k}-V_i^{(k)}\big)\frac{\partial V_j^{(\ell)}}{\partial x_i}
       -\sum_{j=1}^N p_j\sum_{i=1}^N \D_t\big((x_i)_{t_k}-V_i^{(k)}\big)\frac{\partial V_j^{(\ell)}}{\partial \dot x_i}\ .
\end{eqnarray*}
Since the symmetries $v_k$ and $v_\ell$ commute, we have from Proposition \ref{prop comm symm}:
\[
\D_{v_k} F_\ell-\D_{v_\ell} F_k=c_{k\ell}+\sum_{j=1}^Np_i\left(\D_{v_k} V_j^{(\ell)}-\D_{v_\ell} V_j^{(k)}\right).
\]
Collecting everything together and using Lemma \ref{lemma F}, we see that in (\ref{eq: proof prop 2 aux}) all terms with $\D_t\big((x_i)_{t_k}-V_i^{(k)}\big)$ cancel away, and we obtain:
\begin{eqnarray*}
\lefteqn{\D_{t_k} L_\ell(x,\dot x,x_{t_\ell}) -\D_\ell L_k(x,\dot x,x_{t_k})  =  c_{k\ell}
   +\sum_{i=1}^N \left(\big((x_i)_{t_k}-V_i^{(k)}\big)\EE_i^{(\ell)}-\big((x_i)_{t_\ell}-V_i^{(\ell)}\big)\EE_i^{(k)}\right)}\\ 
  & & +\sum_{i=1}^N \frac{\partial L_k}{\partial x_i}\big((x_i)_{t_\ell}-V_i^{(\ell)}\big) 
         -\sum_{i=1}^N \big((x_i)_{t_\ell}-V_i^{(\ell)}\big)\frac{\partial F_k}{\partial x_i}
          +\sum_{j=1}^N p_j\sum_{i=1}^N \big((x_i)_{t_\ell}-V_i^{(\ell)}\big)\frac{\partial V_j^{(k)}}{\partial x_i}\\
  & & -\sum_{i=1}^N \frac{\partial L_\ell}{\partial x_i}\big((x_i)_{t_k}-V_i^{(k)}\big)
        +\sum_{i=1}^N \big((x_i)_{t_k}-V_i^{(k)}\big)\frac{\partial F_\ell}{\partial x_i}
         -\sum_{j=1}^N p_j\sum_{i=1}^N \big((x_i)_{t_k}-V_i^{(k)}\big)\frac{\partial V_j^{(\ell)}}{\partial x_i}\  .     
\end{eqnarray*}  
It remains to substitute expressions for $\partial L_k/\partial x_i$ from (\ref{eq: dLkdxi}).
\endpf

\section{Pluri-Lagrangian problems in the phase space}
\label{Sect phase space}

We recall that motions of mechanical systems can be described as extremals of a more general variational principle that
the Hamilton's principle of the least action, namely that they are critical points of the following {\em action functional in the phase space}:
\begin{equation}\label{eq: action phase space}
S[x,p]=\int_{t_0}^{t_1} \Lambda(x(t),p(t),\dot x(t))\dd t
\end{equation}
where $(x,p):[t_0,t_1]\to T^*\mathcal X$ is an arbitrary curve in the phase space with the fixed values of $x(t_0)$ and $x(t_1)$, and the ``Lagrange function'' $\Lambda: T(T^*\mathcal X)\to \mathbb R$ is given by 
\begin{equation}\label{eq: Lagr phase space}
\Lambda(x,p,\dot x)=\sum_{i=1}^N p_i\dot x_i-H(x,p).
\end{equation}
Note that this ``Lagrange function'' is degenerate in two ways. First, it depends on the ``velocities'' $(\dot x,\dot p)$ linearly. Second, it actually does not depend on one half of velocities, namely on $\dot p$. This last feature forces us to consider admissible variations fixing $x(t_0)$ and $x(t_1)$, but not $p(t_0)$ and $p(t_1)$. However, Euler-Lagrange equations for the ``Lagrange function'' $\Lambda(x,p,\dot x)$ are computed in the standard way, using
$$
\frac{\partial \Lambda}{\partial x_i}=-\frac{\partial H}{\partial x_i}, \quad \frac{\partial \Lambda}{\partial p_i}=\dot{x_i}-\frac{\partial H}{\partial p_i}, \quad 
\frac{\partial \Lambda}{\partial \dot x_i}=p_i, \quad \frac{\partial \Lambda}{\partial \dot p_i}=0,
$$ 
and read:
\begin{equation}\label{eq: Ham from EL phase space}
-\frac{\partial H}{\partial x_i}-\D_t p_i=0, \quad \dot{x_i}-\frac{\partial H}{\partial p_i}=0,
\end{equation}
which coincides with the Hamiltonian equation of motion (\ref{eq: Ham eq}). An important insight is that the usual Lagrange function $L(x,\dot x)$ is the critical value of $\Lambda(x,p,\dot x)$ with respect to $p$, which is achieved at $p$ given by the Legendre transformation $p_i=\partial L(x,\dot x)/\partial x_i$. This explains why the critical curves of the action (\ref{lp}) in the configuration space $\mathcal X$, in the class of variations fixing $x(t_0)$ and $x(t_1)$, are obtained by substitution $p_i=\partial L(x,\dot x)/\partial x_i$ from the critical curves of the action (\ref{eq: action phase space}) in the phase space $T^*\mathcal X$, in the class of variations fixing $x(t_0)$ and $x(t_1)$, but not $p(t_0)$ and $p(t_1)$.

We now observe that the coefficients $L_k(x,\dot x,x_{t_k})$ given by equation (\ref{eq: L k int}) are, in a similar way, the values of the functions
\begin{equation}\label{eq: Lk phase space}
\Lambda_k(x,p,x_{t_k})=\sum_{i=1}^N p_i (x_i)_{t_k}-H_k(x,p),
\end{equation}
evaluated at the point $p$ given by the Legendre transformation $p_i=\partial L(x,\dot x)/\partial x_i$. Here $H_k: T^*\mathcal X\to\mathbb R$ are the Hamilton functions related, via the Legendre transformation, to the Noether integrals $J_k: T\mathcal X\to \mathbb R$ of the commuting variational symmetries $v_k$, as in Section \ref{Sect Ham}. Comparing formulas (\ref{eq: Lagr phase space}) and (\ref{eq: Lk phase space}), we see that the time $t$ associated with the Lagrange function $L$ is on absolutely the same footing as the times $t_k$ associated with the commuting variational symmetries.  

This leads us naturally to consider the following {\em pluri-Lagrangian problem in the phase space} (which we formulate without choosing one time playing a special role, like we did in Section \ref{Sect pluri}). Consider the 1-form 
\begin{equation}\label{eq: form phase space}
\mathcal L=\Lambda_1(x,p,x_{t_1})\dd t_1+\ldots+\Lambda_m(x,p,x_{t_m})\dd t_m,
\end{equation}
whose coefficients depend on (the first jet of) a function $(x,p):\mathbb R^{m}\to T^*\mathcal X$ according to formulas (\ref{eq: Lk phase space}).
Find functions $(x,p):\mathbb R^{m}\to T^*\mathcal X$ which deliver critical points to  functionals
$$
S_\Gamma=\int_\Gamma \mathcal L
$$
for any curve $\Gamma:[0,1]\to\mathbb R^{m}$ in the multi-dimensional space (for fixed values of $x$ at the endpoints of $\Gamma$).

\begin{theorem} \label{Th multi-time EL eqs phase space} {\bf (Multi-time Euler-Lagrange equations in the phase space)}
A function $(x,p):\R^{m}\to T^*\mathcal X$ solves the pluri-Lagrangian problem for the $1$-form (\ref{eq: form phase space})  with coefficients (\ref{eq: Lk phase space}), where $H_1, \ldots , H_m:T^*\mathcal X\to\mathbb R$ are some functions, if and only if it satisfies Hamiltonian equations for the Hamilton functions $H_1,\ldots, H_m$:
\beq\label{eq: Ham eq k}
(x_i)_{t_k}=\frac{\partial H_k}{\partial p_i}, \quad (p_i)_{t_k}=-\frac{\partial H_k}{\partial x_i},\quad  k=1,\dots, m.
\eeq
This system is compatible if and only if these $m$ Hamiltonian flows pairwise commute.
\end{theorem}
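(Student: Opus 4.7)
The plan is to apply Theorem \ref{Th multi-time EL eqs} directly to the 1-form (\ref{eq: form phase space}), treating both coordinates of $(x,p):\R^{m}\to T^*\mathcal X$ as dependent variables. The key observation is that each coefficient $\Lambda_k=\sum_i p_i(x_i)_{t_k}-H_k(x,p)$ is linear in $x_{t_k}$ and independent of all other first jet variables (no $x_{t_\ell}$ for $\ell\neq k$, and no $p_{t_\ell}$ at all). This degeneracy collapses the multi-time Euler-Lagrange equations to first order, and lets them be read off essentially by inspection.

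First I would verify the conditions (\ref{eq: pluri 1}): $\partial\Lambda_k/\partial(x_i)_{t_\ell}=0$ and $\partial\Lambda_k/\partial(p_i)_{t_\ell}=0$ for $\ell\neq k$ are immediate. The matching conditions (\ref{eq: pluri 2}) give $\partial\Lambda_k/\partial(x_i)_{t_k}=p_i$, which is independent of $k$, and $\partial\Lambda_k/\partial(p_i)_{t_k}=0$, which is trivially the same for all $k$; so the common ``conjugate momenta'' in the sense of Theorem \ref{Th multi-time EL eqs} are $p_i$ and $0$, respectively. Equations (\ref{eq: pluri 3}) then read, in the $x_i$-direction, $\D_{t_k} p_i=\partial\Lambda_k/\partial x_i=-\partial H_k/\partial x_i$, and in the $p_i$-direction, $\D_{t_k}0=\partial\Lambda_k/\partial p_i=(x_i)_{t_k}-\partial H_k/\partial p_i$. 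Together these are exactly the Hamiltonian equations (\ref{eq: Ham eq k}).

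For the compatibility assertion, I would substitute the Hamiltonian equations (\ref{eq: Ham eq k}) back into the mixed partials and use the Jacobi identity to obtain
$$
(x_i)_{t_k t_\ell}-(x_i)_{t_\ell t_k}=\frac{\partial}{\partial p_i}\{H_k,H_\ell\},\qquad
(p_i)_{t_k t_\ell}-(p_i)_{t_\ell t_k}=-\frac{\partial}{\partial x_i}\{H_k,H_\ell\}.
$$
Requiring the vanishing of both expressions throughout $T^*\mathcal X$ amounts to $\{H_k,H_\ell\}$ being constant on $T^*\mathcal X$, which is precisely the pairwise commutativity of the Hamiltonian flows (up to the by-now-standard allowance of a constant Poisson bracket as in Proposition \ref{prop comm symm}). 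The main --- and really the only --- conceptual step is recognizing that the degeneracy of the phase-space ``Lagrangian'' $\Lambda_k$ is exactly what makes the multi-time Euler-Lagrange machinery produce first-order Hamiltonian equations in a uniform way; once that is in hand, both halves of the theorem reduce to direct computation, and no reference back to the Lagrangian-side lemmas of Sections \ref{Sect var sym}--\ref{Sect almost closed} is required.
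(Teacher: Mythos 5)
Your proof is correct and follows essentially the same route as the paper: apply the multi-time Euler--Lagrange equations to the degenerate 1-form, observe that (\ref{eq: pluri 1}) and (\ref{eq: pluri 2}) hold by construction, and read off the Hamiltonian equations from (\ref{eq: pluri 3}) in both the $x$- and $p$-directions. You actually spell out the compatibility computation (the mixed-partial differences being the derivatives of $\{H_k,H_\ell\}$), which the paper dismisses as obvious.
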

\begin{proof}
One easily computes the multi-time Euler-Lagrange equations for the $1$-form (\ref{eq: form phase space})  with coefficients (\ref{eq: Lk phase space}):
\begin{itemize}
\item equations (\ref{eq: pluri 1}) are satisfied trivially by construction;
\item equations (\ref{eq: pluri 2}) are also satisfied by construction, since $\partial \Lambda_k/\partial (x_i)_{t_k}=p_i$ for all $k$;
\item equations (\ref{eq: pluri 3}) read: $\D_{t_k}p_i=-\partial H_k/\partial x_i$ and $0=(x_i)_{t_k} -\partial H_k/\partial x_i$. These are derived literally as equations (\ref{eq: Ham from EL phase space}).
\end{itemize}
The last statement of the theorem is obvious.
\end{proof}

Thus, we established the property of joint solutions of a system of pairwise commuting Hamiltonian flows to be critical for the action functionals in the phase space defined along any curve in the multi-dimensional time.

\section{Examples}
\label{Sec examples}

We illustrate the results of the present paper with a couple of well-known examples.

\subsection{Kepler system}

We start with the Kepler system, an example with a ``hidden symmetry'' described by the so called Runge-Lenz vector, a non-obvious integral whose presence ensures that the system is super-integrable, or integrable in the non-commutative sense, with one-dimensional invariant tori (which in this case are just periodic orbits foliating the phase space).  

\begin{itemize}

\item The Lagrange function of the Kepler system (with unit mass) reads:
\beq
L(x,\dot x)=\frac12 ( \dot x_1^2 +  \dot x_2^2 +\dot x_3^2)- \frac{\alpha}{\| x \| },
\label{lag kepler}
\eeq
where $\alpha>0$ is a constant and $\| x \| =  (x_1^2 +  x_2^2 +x_3^2)^{1/2}$. The corresponding Euler-Lagrange equations are
\beq
\EE_i=\frac{\alpha x_i}{\| x\|^3}+ \ddot x_i=0,
\qquad i=1,2,3. \label{EL kepler}
\eeq
They admit the energy integral
\beq
J(x,\dot x)=\frac12 ( \dot x_1^2 +  \dot x_2^2 +\dot x_3^2)+ \frac{\alpha}{\| x \| },
\label{energy kepler}
\eeq
as well as three angular momenta integrals $M_i=x_j\dot x_k-x_k\dot x_j$. The latter can be considered as Noether integrals due to the point symmetries $x_j\partial/\partial x_k-x_k\partial/\partial x_j$.

\item 
The Kepler system admits a variational symmetry:
\beq
v_1=V^{(1)}_1(x, \dot x)\frac{\pa}{\pa x_1}+V^{(1)}_2(x, \dot x)\frac{\pa}{\pa x_2}+ V^{(1)}_3(x, \dot x) \frac{\pa}{\pa x_3},\label{symm kepler}
\eeq
with
\bea
V^{(1)}_1(x, \dot x)&=&x_2 \dot x_2 + x_3 \dot x_3,
\nn \\
V^{(1)}_2(x, \dot x)&=&x_2 \dot x_1 -2 x_1 \dot x_2,
\nn \\
V^{(1)}_3(x, \dot x)&=&x_3 \dot x_1 -2  x_1 \dot x_3.
\nn
\eea
Indeed, a straightforward computation confirms that
$$
\D_{v_1} L (x, \dot x)- \D_{t} F_1(x, \dot x)=0, 
$$
where the flux $F_1$ is given by
$$
F_1(x, \dot x) =
\dot x_1(x_2\dot x_2 + x_3\dot x_3) -x_1 ( \dot x_2^2 + \dot x_3^2 )- \frac{\alpha x_1}{\| x\|}.
$$
The corresponding Noether integral is 
\beq
J_1(x, \dot x) =
\dot x_1 (x_2\dot x_2 + x_3\dot x_3) -x_1 ( \dot x_2^2 + \dot x_3^2 )+\frac{\alpha x_1}{\| x\|}.
\label{J kepler}
\eeq

\item To the variational symmetry $v_1$ given by (\ref{symm kepler}) we can associate a flow with the
time $t_1$:
\bea
(x_1)_{t_1}&=& x_2 \dot x_2 + x_3 \dot x_3,  \nn \\
(x_2)_{t_1}&=&x_2 \dot x_1 -2 x_1 \dot x_2, 
\label{flow1 kepler} \\
(x_3)_{t_1}&=& x_3\dot x_1 -2  x_1 \dot x_3. \nn
\eea
Set
$$
L_1 (x,\dot x,x_{t_1})=\sum_{i=1}^3 \dot x_i (x_i)_{t_1} -J_1(x,\dot x).
$$
Then the Euler-Lagrange equations associated with $L_1$ (see (\ref{eq: EL k})) are computed to be
\bea
\EE_1^{(1)}&=& \dot x_2^2 + \dot x_3^2 -
\frac{\alpha (x_2^2+x_3^2)}{\|x\|^3}-(\dot x_1)_{t_1}=0, \nn \\
\EE_2^{(1)}&=& 
- \dot x_1 \dot x_2 +\frac{\alpha x_1 x_2}{\|x\|^3}
-(\dot x_2)_{t_1}=0, \label{E kepler} \\
\EE_3^{(1)}&=& 
 -\dot x_1 \dot x_3 +\frac{\alpha x_1 x_3}{\|x\|^3}
-(\dot x_3)_{t_1}=0. \nn
\eea
As ensured by Proposition \ref{lemma var eq 1}, equations (\ref{E kepler})  are differential consequences of (\ref{EL kepler}) and 
(\ref{flow1 kepler}). Indeed, differentiating (\ref{flow1 kepler}) with respect to $t$ and then substituting 
$\ddot x_i=-\alpha x_i/\| x\|^3$ (coming from (\ref{EL kepler})) in the resulting expression, we recover (\ref{E kepler}).

\item Actually, Kepler system possesses two further variational symmetries, say $v_2$ and $v_3$, which are obtained by permuting coordinates $x_1,x_2,x_3$ in (\ref{symm kepler}). These permutations give also the fluxes and Noether integrals corresponding to symmetries $v_2$ and $v_3$. One can consider the integrals $J_1,J_2,J_3$ as the components of the famous Runge-Lenz vector:
$$
J(x, \dot x)= \dot x \times (x \times \dot x)+ \frac{\alpha x}{\| x\|}.
$$
However, symmetries $v_1,v_2,v_3$ do not commute on solutions of Euler-Lagrange equations (\ref{EL kepler}). For example, one finds
\bea
\D_{v_1} V_1^{(2)}(x, \dot x) -\D_{v_2} V_1^{(1)}(x, \dot x) &=&-\ddot x_2 ( 2 x_1^2 - 2 x_2^2 -x_3^2 ) - x_2 x_3 \ddot x_3\nn \\
&&  - x_2 (3 \dot x_1^2 +\dot x_2^2+ \dot x_3^2) + 2 x_1 \dot x_1 \dot x_2, \nn
\eea
which does not vanish modulo $\mathcal E_i=0$.
\end{itemize}

\subsection{Toda lattice}

\begin{itemize}

\item The Lagrangian of the Toda lattice reads
\beq
L(x,\dot x)=\sum_{i=1}^N \left( \frac12
\dot x_i^2 - \E^{x_{i+1}-x_i} \right).
\label{Lag toda}
\eeq
One usually imposes one of the two types of boundary conditions: periodic, 
$x_0\equiv x_N$,  $x_{N+1}\equiv x_1$, or open-end, $x_{0}=+\infty$, $x_{N+1}=-\infty$.
 The corresponding Euler-Lagrange equations are
\beq
\EE_i=\E^{x_{i+1}-x_i}- \E^{x_{i}-x_{i-1}}-\ddot x_i=0,
\qquad i=1,\dots,N. \label{EL toda}
\eeq

\item The Toda lattice possesses $N$ commuting variational symmetries. The two simplest ones are:
\beq
v_1= \sum_{i=1}^N V_i ^{(1)}(x, \dot x)\frac{\pa}{\pa x_i}, \qquad
v_2= \sum_{i=1}^N V_i ^{(2)}(x, \dot x)\frac{\pa}{\pa x_i}, \label{symm Toda}
\eeq
with characteristics
\bea
V_i ^{(1)}(x, \dot x)&=&
\dot x_i^2 +\E^{x_{i+1}-x_i}+\E^{x_{i}-x_{i-1}}, \nn \\
V_i ^{(2)}(x, \dot x)&=&
\dot x_i^3 +
( \dot x_{i+1}+2 \dot x_i)\E^{x_{i+1}-x_i}+
(2 \dot x_i+ \dot x_{i-1})\E^{x_{i}-x_{i-1}}.\nn
\eea
Indeed, a simple computation confirms that
$$
\D_{v_k} L (x, \dot x)- \D_{t} F_k(x, \dot x)=0, \qquad
k=1,2,
$$
where the fluxes $F_1$ and $F_2$ are given by
\bea
F_1(x, \dot x) &=&\frac23\sum_{i=1}^N 
\dot x_i^3, \nn \\
F_2(x, \dot x) &=&\sum_{i=1}^N \left( \frac34 \dot x_i^4
+( \dot x_i^2 +  \dot x_i  \dot x_{i+1}+ \dot x_{i+1}^2 )\E^{x_{i+1}-x_i} \right) \nn \\
&&-\sum_{i=1}^N \left( \frac12 \E^{2(x_{i+1}-x_i)}-\E^{x_{i+2}-x_i} \right). \nn
\eea
The Noether integrals corresponding to
the characteristics $V_i ^{(1)}$ and $V_i ^{(2)}$
are:
\bea
J_1(x, \dot x)  &=&\sum_{i=1}^N
\left(\frac13 \dot x_i^3 + (\dot x_i + \dot x_{i+1})\E^{x_{i+1}-x_i} \right), \label{J1 toda} \\
J_2(x, \dot x) &=&\sum_{i=1}^N \left( \frac14 \dot x_i^4
+(\dot x_i^2 + \dot x_i \dot x_{i+1}+\dot x_{i+1}^2 )\E^{x_{i+1}-x_i} \right)\nn \\
&&+ \sum_{i=1}^N \left(\frac12 \E^{2(x_{i+1}-x_i)}+\E^{x_{i+2}-x_i} \right). \label{J2 toda}
\eea
The two symmetries $v_1$ and $v_2$ commute on solutions of Euler-Lagrange equations (\ref{EL toda}), since
$$
\D_{v_1} V_i^{(2)}(x, \dot x) -\D_{v_2} V_i^{(1)}(x, \dot x)  = \sum_{j=1}^N  r_{ij}(x, \dot x) \EE_j\ , \quad i=1,\ldots, N,
$$
with
$$
r_{ij}(x, \dot x) =-2 (\dot x_{i+1}-\dot x_i ) \E^{x_{i+1}-x_i} \delta_{i+1,j}
+2 (\dot x_i - \dot x_{i-1}) \E^{x_{i}-x_{i-1}} \delta_{i-1,j}.
$$

\item The two flows corresponding to the commuting variational symmetries $v_1$ and $v_2$ are:
\bea
(x_i )_{t_1}&=&
\dot x_i^2 +\E^{x_{i+1}-x_i}+\E^{x_{i}-x_{i-1}}, \label{flow1} \\
(x_i)_{t_2}&=&
\dot x_i^3 +
(2 \dot x_i+ \dot x_{i+1})\E^{x_{i+1}-x_i}+
(2 \dot x_i+ \dot x_{i-1})\E^{x_{i}-x_{i-1}},\label{flow2}
\eea
with $i=1,\dots,N$.  For $k=1,2$, we define
$$
L_k (x,\dot x,x_{t_k})=\sum_{i=1}^N \dot x_i (x_i)_{t_k} -J_k(x,\dot x),
$$
with $J_1$, $J_2$ from (\ref{J1 toda}), (\ref{J2 toda}).
Euler-Lagrange equations (\ref{eq: EL k}) associated with
$L_1$ and $L_2$ read respectively:
\beq
\EE_i^{(1)}=(\dot x_i+\dot x_{i+1})\E^{x_{i+1}-x_{i}}-
(\dot x_i+\dot x_{i-1})\E^{x_i-x_{i-1}} - (\dot x_i)_{t_1}=0,
\label{E1}
\eeq
and
\bea
\EE_i^{(2)}&=&(\dot x_i^2 + \dot x_i\dot x_{i+1}+\dot x_{i+1}^2 )\E^{x_{i+1}-x_i}  -
(\dot x_{i-1}^2 + \dot x_{i-1}\dot x_i+\dot x_i^2 )\E^{x_{i}-x_{i-1}}\nn \\
&& +\E^{2(x_{i+1}-x_{i})}-\E^{2(x_{i}-x_{i-1})}+\E^{x_{i+2}-x_{i}}-\E^{x_{i}-x_{i-2}}- (\dot x_i)_{t_2}=0, \label{E2}
\eea
with $i=1,\dots,N$. 
As stated in Proposition \ref{lemma var eq 1}, equations (\ref{E1}) and (\ref{E2}) are differential consequences of (\ref{EL toda}) and 
(\ref{flow1})--(\ref{flow2}). Indeed, to derive (\ref{E1}), (\ref{E2}), one can differentiate (\ref{flow1}), resp. (\ref{flow2}), with respect to $t$ and then substitute $\ddot x_i=\E^{x_{i+1}-x_i}- \E^{x_{i}-x_{i-1}}$ (coming from (\ref{EL toda})) into the resulting expressions. 

\end{itemize}

\section{Conclusions}

In hindsight, the Hamiltonian picture of the pluri-Lagrangian structure on the phase space, generated by commuting Hamiltonian flows, as stated in Theorem \ref{Th multi-time EL eqs phase space}, looks to be the most simple and natural and to provide a genuine explanation to the corresponding results in the configuration space, as given in Sections \ref{Sect pluri config space}, \ref{Sect almost closed}. However, the purely Lagrangian point of view seems to be more universal and directly applicable in many contexts, where the Hamiltonian point of view is tricky or just unavailable, like discrete time and/or higher dimensional problems described by partial differential or partial difference equations \cite{BPS1,BPS2,BPS3,S2,S3}.

\section*{Acknowledgments}
This research is supported by the DFG Collaborative Research Center TRR 109 ``Discretization in Geometry and Dynamics''. We are grateful to Nicolai Reshetikhin and to Nikolai Dimitrov for useful discussions.


\end{document}